\newcommand{\utilde}[1]{\underset{\sim}{#1}}
\def\tilde{\widetilde}
\def\A{{\cal A}}
\def\B{{\cal B}}
\def\D{{\cal D}}
\def\F{{\cal F}}
\def\M{{\cal M}}
\def\N{{\cal N}}
\def\R{{\cal R}}
\def\H{{\cal H}}
\def\K{{\cal K}}
\def\S{{\cal S}}
\def\tcP{{\tilde\cP_+}}
\def\PSL{{{\rm PSL}(2,\mathbb R)}}
\def\S2{S^{1(2)}}
\def\Poi{{\mathcal P}_+^\uparrow}
\def\tPoi{\tilde{\mathcal P}_+^\uparrow}
\newtheorem{theorem}{Theorem}[section]
\newtheorem{lemma}[theorem]{Lemma}
\newtheorem{corollary}[theorem]{Corollary}
\newtheorem{proposition}[theorem]{Proposition}
\theoremstyle{definition} 
\newtheorem{definition}[theorem]{Definition}
\theoremstyle{remark} 
\newtheorem{remark}[theorem]{Remark}
\def\PSL{PSU(1,1)}
\def\RR{{\mathbb R}}
\def\CC{{\mathbb C}}
\def\NN{{\mathbb N}}
\def\ZZ{{\mathbb Z}}
\def\SL2{{{\rm SL}(2,\RR)}}
\def\SLC{{{\rm SL}(2,\CC)}}
\def\PSL2{{{\rm PSL}(2,\RR)}}
\def\U1{{{\rm V}(1)}}
\def\SU2{{{\rm SV}(2)}}
\def\SU{{{\rm SU}}}
\def\A{{\mathcal A}}
\def\B{{\mathcal B}}
\def\D{{\mathcal D}}
\def\F{{\mathcal F}}
\def\H{{\mathcal H}}
\def\cL{\mathcal L}
\def\K{{\mathcal K}}
\def\M{{\mathcal M}}
\def\N{{\mathcal N}}
\def\cO{{\mathcal O}}
\def\P{{\mathcal P}}
\def\cP{{\mathcal P}}
\def\R{{\mathbb R}}
\def\T{{\mathcal T}}
\def\cU{{\mathcal U}}
\def\W{{\mathcal W}}
\def\cW{{\mathcal W}}
\def\eins{{\mathbf 1}}
\begin{document}
\title{The Bisognano-Wichmann property\\ on nets of standard subspaces,\\ some sufficient conditions}
\author{
{\bf Vincenzo Morinelli}\footnote{Supported by the ERC advanced grant 669240 QUEST ``Quantum Algebraic Structures and Models", and INdAM.}\\Dipartimento di Matematica, Universit\`a di Roma Tor Vergata\\
Via della Ricerca Scientifica, 1, I-00133 Roma, Italy\\
email: {\tt morinell@mat.uniroma2.it},}
%
\date{}
\maketitle
\begin{abstract}
We discuss the Bisognano-Wichmann property for local Poincar\'e covariant nets of standard subspaces.  We provide a sufficient algebraic condition on the covariant representation ensuring  the Bisognano-Wichmann and the Duality properties  without further assumptions on the net. We call it {modularity} condition. It holds for direct integrals of scalar massive and massless representations. We present a class of massive modular covariant nets not satisfying the Bisognano-Wichmann property. Furthermore, we give an outlook on the relation between the Bisognano-Wichmann property and the Split property in the standard subspace setting.
\end{abstract}

\section{Introduction}
The Algebraic Quantum Field Theory (AQFT) is a very fruitful approach to study properties of quantum fields by using operator algebras. Models of local quantum field theories are described by nets of von Neumann algebras on a fixed spacetime, satisfying basic relativistic and quantum assumptions. 
A local Poincar\'e covariant net of von Neumann algebras on a fixed Hilbert space $\H$ is a local isotonic map $\K\ni\cO\mapsto \A(\cO)\subset \B(\H)$ from the set $\K$ of open, connected bounded regions of the $1+3$ dimensional Minkowski spacetime $\mathbb R^{1+3}$ to von Neumann algebras $\A(\cO)\subset\B(\H)$. It is assumed the existence of  a unitary positive energy Poincar\'e representation  $U$ on $\H$ acting covariantly on $\A$ and of a unique (up to a phase) normalized $U$-invariant vector $\Omega\in\H$  which is cyclic for all the local algebras, namely the vacuum vector. The deep connection between the algebraic structure and the geometry of the model is a very fascinating fact. 
In \cite{BW1,bw2} Bisognano and Wichmann showed that any model coming from Wightman fields encloses within itself the information on its geometry.
The authors proved that the  modular operators related to  algebras associated with wedge-shaped regions and the vacuum state have a geometrical meaning: they implement pure Lorentz transformations. This is expressed through the {\bf Bisognano-Wichmann property (B-W)}:
\begin{equation}
\label{eq:biwi}
U(\Lambda_W(2\pi t))=\Delta_{\A(W),\Omega}^{-it}\end{equation} for any wedge  region $W$, where $t\mapsto\Lambda_{W}(t)$ is the one-parameter group of boosts associated with the wedge $W$, and $\Delta^{it}_{\A(W),\Omega}$ is the modular group of the von Neumann algebra $\A(W)$  generated by $\cup_{\cO\subset W}\A(\cO)$ w.r.t. the vacuum vector $\Omega$.

 It can be stated something more.  A von Neumann algebra net $\A$ is said to be \textbf{modular covariant} if 
\begin{equation}\label{eq:modco}
\Delta^{it}_{\A(W),\Omega}\A(O)\Delta^{-it}_{\A(W),\Omega}=\A(\Lambda_W(-2\pi t)O), \quad \text{for any } O\in\K
\end{equation} 
i.e. the modular group associated with any wedge algebra implements the  covariant action of the associated one-parameter group of boosts on the local net. 
The modular covariance property is introduced in \cite{BGLco}, it is weaker than the B-W property 
and it ensures that there exists a covariant unitary positive energy representation of the Poincar\'e group generated by the modular theory of the net algebras \cite{GL}.  This marked one of the successes of the Tomita-Takesaki theory: once the algebra of the observables and the vacuum state are specified, the modular structure is determined and it has a geometrical meaning.  Sufficient conditions on these properties are given  in \cite{bor,B,BDFS,sch}.

The B-W property cares only about the modular theory of the algebraic model, which is contained in the  real  structure of the net. This can be described by using real standard subspaces of the Hilbert state space (cf. definition in Sect. \ref{sect:stsb}). It is possible to characterize the standard subspaces with an analogue of the Tomita-Takesaki modular theory 
which coincides with the von Neumann algebra Tomita-Takesaki theory  when one considers von Neumann algebras $A\subset\B(\H)$ with a cyclic and separating vector $\Omega\in\H$ and the subspaces $H=\overline{A_{sa}\Omega}\subset\H$.  

At this point it is natural to consider analogous nets of standard subspaces, which provide a very fruitful approach to QFT. For instance, they have a key role in finding localization properties of Infinite Spin particles, cf. Ref. \cite{LMR} or finding new models in low dimensional quantum field theory \cite{LL,LW}. A geometrical approach to nets of standard subspaces is in \cite{NO}.

The B-W property is essential to give a canonical structure to the first quantization nets, hence to the free fields (cf. \cite{BGL}): given a particle, namely an irreducible,  positive energy, unitary representation of the Poincar\'e group, there is a canonical way to build up the associated one-particle net of localized states and its second quantization free field.  We make an analysis in the converse sense. Are those nets in some sense unique? At this level, the question is not well posed since there is some freeness in choosing the modular conjugation implementing the position and time reflection. We can ask  when a (unitary, positive energy) Poincar\'e representation $U$ is {\bf modular, i.e. any net of standard subspaces it acts covariantly on satisfies the B-W property} (cf. Definition \ref{def:mod}). In particular $U$ would be implemented by the modular operators. This would give an answer to the necessity to assume the B-W property instead of deducing it by the assumptions.

An approach to this problem is to show the B-W property by exploiting the analyticity property of the wave functions as in \cite{BE,M}.
 It  is difficult to extend this analytic approach to more general representations as infinite direct sums, direct integrals or to the massless case. 

We are going to present a purely algebraic argument giving a sufficient condition for the modularity of a large family of Poincar\'e representations as direct integrals of scalar representations, including the massless case. The idea comes from the following facts: the Lorentz group acts geometrically on the momentum space and one can check that on the mass shell it is possible to pointwise reconstruct the action of a Lorentz transformation, sending a wedge $W$ to its causal complement $W'$ just considering $W$-fixing transformations (see Remark \ref{rmk:orb}). With this hint we introduce the following condition on a (unitary, positive energy) representation $U$ of the Poincar\'e group, called \textbf{modularity condition} (cf. Definition \ref{def:mod}). We ask that the von Neumann algebra, generated by translations and the  Lorentz subgroup of transformations fixing  $W$ is large enough to enclose the transformations sending $W$ onto $W'$.  In Theorem \ref{thm:starcor} we show that the modularity condition is sufficient to prove that $U$ is modular. 
In particular, for any $U$-covariant standard subspace net  $H$ the quotient between the two sides of \eqref{eq:biwi}  is the identity. This condition does neither depend on the net nor on multiplicity of the representation and passes to some direct integrals. Our analysis holds for scalar Poincar\'e representations in any spacetime $\mathbb R^{1+s}$, with $s\geq3$, see Remark \ref{rmk:dim}. A comment on the massless finite helicity case is given in Remark \ref{rmk:hel}.
We rely on the idea that the modular covariance  has to be a natural assumption in every Quantum Field Theory satisfying basic relativistic and quantum hypotheses.

Known counterexamples to modular covariance seem very artificial as they imply a breakdown of Poincar\'e covariance, see \cite{yng, GY}. In Sect. \ref{sect:infcoun}, we give an explicit example of a massive Poincar\'e covariant standard subspace net,  which is modular covariant  not satisfying the B-W property. The massless case was treated in \cite{LMR}. These kinds of general counterexamples clarify what kind of settings may prevent the identification of the covariant representation with the modular symmetries.

The relation between the split and the modular covariance properties is an interesting problem. In \cite{lodo}, Doplicher and Longo proved that if the dual net of local von Neumann algebras
associated with the scalar generalized free field with K\"allen-Lehmann measure $\mu$ has the split property, then $\mu$ is purely atomic and concentrated on isolated
points. In Sect.  \ref{sect:CO}, we give an outlook in the standard subspace setting of the  problem and of this result. 

The paper is organized as follows. In Sect. 2 we introduce the one-particle setting and we recall relevant results on the real subspace structure, on Poincar\'e representations and on nets of standard subspaces. 
In Sect. 3 we present our algebraic condition and its properties. We prove it for general scalar representations in Sect. 4. In Sect. 5 we show the massive counterexample to the B-W  property. In Sect.6 it is discussed the relation between the B-W and  the Split properties in this one-particle setting.

This paper was reviewed in \cite{MMG}.

\section{One-particle net}

\subsection{Standard subspaces}\label{sect:stsb}
 Firstly, we recall some definitions and basic results on standard
subspaces and (generalized) one-particle models.
A linear, real, closed subspace $H$ of a complex Hilbert space $\H$ is called {\bf cyclic} if 
$H+iH$ is dense in $\H$, {\bf separating} if $H\cap iH=\{0\}$ and 
{\bf standard} if it is cyclic and separating.

Given a standard subspace $H$  the associated {\bf Tomita operator} $S_H$ is defined to be the closed the anti-linear involution with domain $H+iH$, given by: $$S_H:H+iH\ni \xi + i\eta \mapsto \xi - i\eta\in H+iH, \qquad\xi,\eta\in H,$$ on the dense domain $H + iH\subset\H$. The polar decomposition $$S_H = J_H\Delta_H^{1/2}$$ defines the positive self-adjoint {\bf modular operator} $\Delta_H$ and the anti-unitary
{\bf modular conjugation} $J_H$. In particular, $\Delta_H$ is invertible and $$J_H\Delta_H J_H=\Delta_H^{-1}.$$

If $H$ is a  real linear subspace of $\H$, the \emph{symplectic complement} of $H$ is defined by
\[
H' \equiv \{\xi\in\H\ ;\ \Im(\xi,\eta)=0, \forall \eta\in H\} = (iH)^{\bot_\R}\ ,
\]
where $\bot_\R$ denotes the orthogonal in $\H$ viewed as a real Hilbert space with respect to the real part of the scalar product on $\H$.
$H'$ is a closed, real linear subspace of $\H$. If $H$ is standard, then $H = H''$. 
It is a fact that $H$ is cyclic (resp. separating) iff $H'$ is separating (resp. cyclic), thus $H$ is standard iff $H'$ is standard and we have
\[
S_{H'} = S^*_H \ .
\]
Fundamental properties of the modular operator and conjugation are
\begin{equation*}
\Delta_H^{it}H = H, \quad J_H H = H' \ ,\qquad  t\in\RR\ .
\end{equation*}
We shall call the one-parameter, strongly continuous group $t\mapsto \Delta_H^{it}$, the {\bf modular group} of $H$ (cf. \cite{RV}).%

 There is a 1-1 correspondence between Tomita operators and  standard subspaces.
\begin{proposition}\label{prop:11}{\rm \cite{L,LN}.}
The map 
\begin{equation}\label{SH}H\longmapsto S_H
\end{equation}
is a bijection between the set of standard subspaces of $\H$ and the set of closed, densely defined, anti-linear involutions on $\H$. The inverse of the map \eqref{SH} is
$$S\longmapsto\ker({\bf1}-S).$$
Furthermore, this map is order-preserving, namely
$$H_1\subset H_2 \quad\Leftrightarrow\quad S_{H_1}\subset S_{H_2},$$
and we have
$S_{H}^*=S_{H'}.$
\end{proposition}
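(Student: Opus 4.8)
The plan is to exhibit the inverse assignment explicitly, verify that the two maps compose to the identity in both orders, and then read off order-preservation and the adjoint identity. First, $H\mapsto S_H$ does land in the stated target set. Because $H$ is separating, $H\cap iH=\{0\}$, so every $\zeta\in H+iH$ has a unique decomposition $\zeta=\xi+i\eta$ with $\xi,\eta\in H$; hence $S_H$ is a well-defined anti-linear operator on the domain $H+iH$, which is dense since $H$ is cyclic, and clearly $S_H^2=\mathbf 1$ on this domain. It is moreover closed: if $\zeta_n=\xi_n+i\eta_n\to\zeta$ and $S_H\zeta_n\to\chi$, then $\xi_n=\tfrac12(\zeta_n+S_H\zeta_n)$ and $i\eta_n=\tfrac12(\zeta_n-S_H\zeta_n)$ converge, and since $H$ and $iH$ are closed their limits lie in $H$ and $iH$ respectively, so $\zeta\in H+iH$ and $S_H\zeta=\chi$.

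For surjectivity together with the inversion formula, let $S$ be a closed, densely defined, anti-linear involution and set $H:=\ker(\mathbf 1-S)$. Then $H$ is a closed real subspace, being the kernel of the closed real-linear operator $\mathbf 1-S$; anti-linearity gives $iH=\ker(\mathbf 1+S)$, and $S\xi=\xi=-S\xi$ forces $\xi=0$, so $H\cap iH=\{0\}$. Since $S$ restricts to a bijection of $D(S)$ with $S^{-1}=S$, for each $\zeta\in D(S)$ the identity $\zeta=\tfrac12(\zeta+S\zeta)+\tfrac12(\zeta-S\zeta)$ writes $\zeta$ as the sum of an element of $\ker(\mathbf 1-S)=H$ and one of $\ker(\mathbf 1+S)=iH$; hence $H+iH=D(S)$ is dense, $H$ is standard, and evaluating $S$ on this decomposition gives $S(\xi+i\eta)=\xi-i\eta$, i.e.\ $S_H=S$. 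Conversely, for standard $H$ a vector $\xi+i\eta$ with $\xi,\eta\in H$ is fixed by $S_H$ iff $\eta=0$, so $\ker(\mathbf 1-S_H)=H$. Thus the two maps are mutually inverse bijections. Order-preservation follows at once: $H_1\subset H_2$ yields $H_1+iH_1\subset H_2+iH_2$ with $S_{H_2}$ restricting to $S_{H_1}$, and conversely $S_{H_1}\subset S_{H_2}$ gives $\ker(\mathbf 1-S_{H_1})\subset\ker(\mathbf 1-S_{H_2})$, that is $H_1\subset H_2$.

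For the adjoint identity $S_H^*=S_{H'}$, which was in fact already recorded above, I would argue from the polar decomposition $S_H=J_H\Delta_H^{1/2}$ together with the recalled relations $J_H\Delta_HJ_H=\Delta_H^{-1}$ and $J_HH=H'$: taking the anti-linear adjoint, $S_H^*=\Delta_H^{1/2}J_H=J_H\Delta_H^{-1/2}$, which is again a closed, densely defined, anti-linear involution, and one checks directly that $\ker(\mathbf 1-J_H\Delta_H^{-1/2})=J_H\ker(\mathbf 1-J_H\Delta_H^{1/2})=J_HH=H'$; by the bijection just established this gives $S_H^*=S_{H'}$. Alternatively, the inclusion $S_{H'}\subseteq S_H^*$ is immediate from the defining property $\Im(\xi,\eta)=0$ for $\xi\in H$, $\eta\in H'$ of the symplectic complement via a polarization rearrangement, and the reverse inclusion then follows by a domain count using $H''=H$. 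I do not expect a genuine obstacle here; the only point requiring care is the bookkeeping with the anti-linear adjoint in this last step — in particular the identification of $D(S_H^*)$ with $H'+iH'$ — while everything else is the routine consequence of using the decomposition $\zeta=\tfrac12(\zeta\pm S\zeta)$.
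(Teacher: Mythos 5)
The paper does not prove this proposition; it is recalled from \cite{L,LN} and stated without argument, so there is no in-paper proof to compare against. Your argument is the standard one from those references and is correct: the decomposition $\zeta=\tfrac12(\zeta+S\zeta)+\tfrac12(\zeta-S\zeta)$, the identifications $H=\ker(\mathbf 1-S)$, $iH=\ker(\mathbf 1+S)$, the closedness of $S_H$ via closedness of $H$ and $iH$, and the order-preservation all check out. The one place to be careful is the last claim, $S_H^*=S_{H'}$. Your primary route uses the polar decomposition together with $J_HH=H'$; but in the standard development (and in \cite{L,LN}) the relation $J_HH=H'$ is itself usually \emph{derived} from $S_H^*=S_{H'}$, so taking it as input is mildly circular unless one is explicitly treating the ``fundamental properties'' recalled earlier in the section as already established. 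Your alternative argument is the right one to make primary: show $S_{H'}\subseteq S_H^*$ directly from $\Im(\xi,\eta)=0$ by polarization, observe that $S_H^*$ is again a closed, densely defined anti-linear involution (using $(S_HS_H)^*\supseteq S_H^*S_H^*$ and density of $D(S_H^*)$), and then identify $\ker(\mathbf 1-S_H^*)=H'$ by testing against $\chi\in H$ and $\chi\in iH$ separately; the bijection already established then forces $S_H^*=S_{H'}$, and $J_HH=H'$ drops out as a corollary rather than an input. With that reordering the proof is complete and self-contained.
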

As a consequence,
$$\left\{\begin{array}{c}
\text{Standard}\\\text{subspaces }\\ H\subset\H\end{array}\right\}\stackrel{1:1}\longleftrightarrow
\left\{\begin{array}{c} \text{closed, dens. def.}\\ \text{ anti-linear }\\ \text{ involutions S}\end{array}\right\} \stackrel{1:1}\longleftrightarrow
\left\{ \begin{array}{c} (J,\Delta) \text{ pairs of an}\\ \text{anti-unitary}\\ \text{ involution  and  a}\\ \text{positive self-adjoint}\\ \text{operator on}\, \H \text{ s.t. }\\ J\Delta J=\Delta^{-1}
\end{array}\right\}
$$
Here is a basic result on the standard subspace modular theory.
\begin{lemma}\label{lem:sym}{\rm \cite{L,LN}.}
Let $H,K\subset\H$ be standard subspaces  and $U\in\cU(\H)$ be a unitary operator on $\H$ such that $UH=K$. Then $U\Delta_H U^*=\Delta_K$ and $UJ_HU^*=J_K$.
\end{lemma}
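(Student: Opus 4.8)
The plan is to reduce everything to the uniqueness of the polar decomposition, exploiting the bijective and natural correspondence between standard subspaces and Tomita operators from Proposition \ref{prop:11}. First I would observe that, since $U$ is unitary and $UH=K$, the operator $US_HU^*$ is a closed, densely defined anti-linear involution: it is anti-linear because $U,U^*$ are linear and $S_H$ is anti-linear; it is an involution because $(US_HU^*)(US_HU^*)=US_H^2U^*=U U^*=\mathbf 1$ on the appropriate domain; and it is closed with dense domain $U(H+iH)=K+iK$ because conjugation by a unitary preserves closedness and density of domains.

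Next I would identify this operator with $S_K$. By the explicit formula for the inverse of the map \eqref{SH} in Proposition \ref{prop:11}, the standard subspace attached to the involution $US_HU^*$ is $\ker(\mathbf 1-US_HU^*)=U\ker(\mathbf 1-S_H)=UH=K$. Since the correspondence $H\mapsto S_H$ is a bijection, this forces
\[
US_HU^*=S_K.
\]
Alternatively one can check this directly: for $\xi,\eta\in H$ one has $US_HU^*(U\xi+iU\eta)=US_H(\xi+i\eta)=U\xi-iU\eta$, which is exactly the action of $S_K$ on $K+iK=U(H+iH)$.

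Finally I would take polar decompositions. Writing $S_H=J_H\Delta_H^{1/2}$, we get
\[
S_K=US_HU^*=(UJ_HU^*)(U\Delta_H^{1/2}U^*)=(UJ_HU^*)(U\Delta_HU^*)^{1/2}.
\]
Here $U\Delta_HU^*$ is positive self-adjoint, $(U\Delta_HU^*)^{1/2}=U\Delta_H^{1/2}U^*$ by the functional calculus, and $UJ_HU^*$ is an anti-unitary involution satisfying $(UJ_HU^*)(U\Delta_HU^*)(UJ_HU^*)=U(J_H\Delta_HJ_H)U^*=U\Delta_H^{-1}U^*=(U\Delta_HU^*)^{-1}$. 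Thus the right-hand side is a legitimate polar decomposition of $S_K$ into an anti-unitary times the square root of a positive self-adjoint operator, and by uniqueness of the polar decomposition of a closed densely defined anti-linear operator it must coincide with $S_K=J_K\Delta_K^{1/2}$. Comparing the two decompositions yields $U\Delta_HU^*=\Delta_K$ and $UJ_HU^*=J_K$.

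I do not expect a serious obstacle here; the only point requiring a little care is the uniqueness of the polar decomposition in the anti-linear setting, which holds because $S_H^*S_H=\Delta_H$ is positive self-adjoint and the standard argument for polar decomposition of closed operators goes through (anti-linearity only permutes the roles of the domains and does not affect uniqueness). Conjugation by the unitary $U$ commutes with all the relevant operations — taking adjoints, forming $S^*S$, extracting square roots via functional calculus — so each piece of the decomposition transforms covariantly, which is what the lemma asserts.
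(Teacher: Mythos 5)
Your proof is correct, and since the paper states this lemma without proof (citing \cite{L,LN}), there is nothing to diverge from: your argument --- identifying $US_HU^*$ with $S_K$ via the bijection of Proposition \ref{prop:11} (or the direct computation on $K+iK$) and then invoking uniqueness of the polar decomposition --- is exactly the standard one given in those references.
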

The following is the analogue of the Takesaki theorem for standard subspaces.
\begin{lemma}\label{inc}{\rm \cite{L,LN}.}
Let $H\subset \H$ be a standard subspace, and $K\subset H$ be  a closed, real linear subspace of $H$. 

If $\Delta_H^{it}K=K$, $\forall t\in\R$, then $K$ is a standard subspace of $\K\equiv \overline{K+iK}$ and $\Delta_H|_K$ is the modular operator of $K$ on $\K$. 
Moreover, if $K$ is a cyclic subspace of $\H$, then $H=K$.
\end{lemma}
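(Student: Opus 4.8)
The plan is to prove the last (``moreover'') assertion first, as a self-contained rigidity statement, and then to obtain the main claim by cutting the modular data of $H$ down to $\K:=\overline{K+iK}$. I will use freely that $\Delta_H^{it}$ commutes with $J_H$, which is immediate from $J_H\Delta_HJ_H=\Delta_H^{-1}$. First the elementary facts. Since $K\subseteq H$ and $H$ is separating, $K\cap iK\subseteq H\cap iH=\{0\}$, so $K$ is separating; it is cyclic in $\K$ by construction, hence standard in $\K$. Moreover, for $\xi,\eta\in K$ one has $\xi+i\eta\in H+iH=D(S_H)$ and $S_H(\xi+i\eta)=\xi-i\eta\in K+iK$, and the assignment $\xi+i\eta\mapsto\xi-i\eta$ on $K+iK$ is already closed: if $\xi_n+i\eta_n\to\zeta$ and $\xi_n-i\eta_n\to\psi$ with $\xi_n,\eta_n\in K$, then $\xi_n$ and $\eta_n$ converge in the closed set $K$, so $\zeta\in K+iK$ and $\psi=S_K\zeta$. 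Hence $S_K=S_H\res(K+iK)$, in particular $S_K\subseteq S_H$; note this uses only $K\subseteq H$, not the invariance hypothesis.

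For the rigidity statement assume $K$ is cyclic in $\H$, so that $K$ is standard in $\H$ and $\Delta_K,J_K,S_K$ act on $\H$. Lemma~\ref{lem:sym} applied with $U=\Delta_H^{it}$ to the equality $\Delta_H^{it}K=K$ gives $\Delta_H^{it}J_K\Delta_H^{-it}=J_K$; together with $[\Delta_H^{it},J_H]=0$ this shows the unitary $W:=J_KJ_H$ commutes with every $\Delta_H^{it}$, hence with $\Delta_H^{1/2}$. Now $S_K\subseteq S_H$ reads $J_K\Delta_K^{1/2}\subseteq J_H\Delta_H^{1/2}$; left-multiplying by $J_K$ yields $\Delta_K^{1/2}\subseteq W\Delta_H^{1/2}$, and taking adjoints (using that $W$ is bounded and commutes with $\Delta_H^{1/2}$) gives $W^*\Delta_H^{1/2}\subseteq\Delta_K^{1/2}$, so $D(\Delta_H^{1/2})\subseteq D(\Delta_K^{1/2})$, i.e. $H+iH\subseteq K+iK$. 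With the reverse inclusion coming from $K\subseteq H$ this forces $D(S_H)=D(S_K)$, whence $S_K=S_H$ and $K=\ker(1-S_K)=\ker(1-S_H)=H$ by Proposition~\ref{prop:11}.

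For the general case, $\Delta_H^{it}K=K$ gives $\Delta_H^{it}(K+iK)=K+iK$, so $\Delta_H^{it}\K=\K$ and $\K$ reduces $\Delta_H^{it}$, hence $\Delta_H$ and $\Delta_H^{1/2}$. The key point is that $\K$ is also $J_H$-invariant: for $\zeta\in K+iK$ one has $J_H(S_K\zeta)=J_HS_H\zeta=\Delta_H^{1/2}\zeta\in\K$, and since $S_K$ maps $K+iK$ onto itself this gives $J_H(K+iK)\subseteq\K$, so $J_H\K\subseteq\K$ and then $J_H\K=\K$. Thus $\K$ reduces $S_H=J_H\Delta_H^{1/2}$; writing $S_1:=S_H\res\K$, a closed densely defined anti-linear involution of $\K$ with polar decomposition $(J_H\res\K)(\Delta_H^{1/2}\res\K)$, Proposition~\ref{prop:11} in $\K$ produces a standard subspace $K_1:=\ker(1-S_1)$ of $\K$ with $S_{K_1}=S_1$, $\Delta_{K_1}=\Delta_H\res\K$, $J_{K_1}=J_H\res\K$, and $K\subseteq K_1$ since $S_K\subseteq S_1$. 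Finally $\Delta_{K_1}^{it}K=\Delta_H^{it}K=K$, so the rigidity statement applied to $K\subseteq K_1$ inside $\K$ yields $K=K_1$, i.e. $S_K=S_H\res\K$, $\Delta_K=\Delta_H\res\K$ and $J_K=J_H\res\K$; this is the assertion, and the ``moreover'' clause is the special case $\K=\H$.

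The main obstacle I anticipate is the rigidity step --- that a $\Delta_H^{it}$-invariant standard subspace of $H$ equals $H$ --- together with the invariance $J_H\K=\K$ inside it: unlike invariance under the modular \emph{group}, invariance under the modular \emph{conjugation} is not apparent a priori, yet it is precisely what allows the whole Tomita structure of $H$ to be restricted to $\K$. The adjoint manipulations in the rigidity step must also be carried out with care about domains, leaning on the fact that $J_H$, $J_K$ and $W=J_KJ_H$ all commute with $\Delta_H^{it}$.
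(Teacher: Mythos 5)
Your proof is correct; the paper itself only quotes this lemma from \cite{L,LN} without proof, and your argument --- establishing $S_K\subseteq S_H$, using Lemma \ref{lem:sym} to show that $W=J_KJ_H$ commutes with the modular group of $H$, and comparing domains after taking adjoints to force $H+iH\subseteq K+iK$ --- is essentially the standard one found in those references. The reduction of the general case to the cyclic (``rigidity'') case via the reducing subspace $\K=\overline{K+iK}$ is also handled correctly, including the non-obvious $J_H$-invariance of $\K$ obtained from $J_HS_K=\Delta_H^{1/2}$ on $K+iK$.
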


The following is the Borchers theorem in the standard subspace setting.
\begin{theorem}{\rm \cite{L,LN}.}\label{Borch}
Let $H\subset\H$ be a standard subspace, and $U(t)=e^{itP}$ be a one-parameter unitary group on $\H$ with  generator $\pm P>0$, such that $U(t)H\subset H$, $\forall t\geq 0$. Then, \begin{equation}\label{eq:tradil}\left\{\begin{array}{ll}
\Delta_H^{is}U(t)\Delta_H^{-is}= U(e^{\mp2\pi s}t)&\\J_HU(t)J_H=U(-t)\end{array}\right.\qquad \forall t,s\in\RR.\end{equation}
\end{theorem}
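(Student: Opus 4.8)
This is the standard-subspace form of Borchers' commutation theorem, and the plan is to transport Borchers' complex-analytic argument to the modular theory of standard subspaces. I would first reduce to the case $P>0$. If instead $-P>0$, note that $U(t)H\subseteq H$ for $t\geq0$ is equivalent, taking symplectic complements and using $(U(t)H)'=U(t)H'$, to $U(-t)H'\subseteq H'$ for $t\geq0$; hence the group $V(t):=U(-t)$ has positive generator $-P$ and satisfies $V(t)H'\subseteq H'$ for $t\geq0$, so applying the $P>0$ case to the standard subspace $H'$ (for which $\Delta_{H'}=\Delta_H^{-1}$, $J_{H'}=J_H$) and to $V$, then undoing the substitutions $\Delta_{H'}^{is}=\Delta_H^{-is}$, $V(\cdot)=U(-\cdot)$ and changing variables $s\mapsto-s$, $t\mapsto-t$, gives exactly \eqref{eq:tradil} for $U$ in the $-P>0$ case. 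Assume now $P>0$. Two analytic facts are used. (a) \emph{Positivity of the generator:} for every $\psi\in\H$ the orbit $t\mapsto U(t)\psi$ extends to a holomorphic $\H$-valued function on $\{\Im t>0\}$, continuous up to $\RR$ and bounded there by $\|\psi\|$; equivalently $w\mapsto U(w)$ is a holomorphic family of contractions on $\{\Im w\geq0\}$. (b) \emph{Modular strip analyticity:} for $\xi\in H$ the map $s\mapsto\Delta_H^{is}\xi$ extends to a bounded continuous $\H$-valued function on $\{-\tfrac12\leq\Im s\leq0\}$, holomorphic in the interior, with boundary value $J_H\Delta_H^{i\Re s}\xi\in H'$ at $\Im s=-\tfrac12$; symmetrically, for $\eta\in H'$ the map $s\mapsto\Delta_H^{is}\eta$ extends to $\{0\leq\Im s\leq\tfrac12\}$ with boundary value $J_H\Delta_H^{i\Re s}\eta\in H$ at $\Im s=\tfrac12$. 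Both statements in (b) follow from the spectral calculus of $\Delta_H$ together with $S_H\xi=\xi$ for $\xi\in H$, $S_{H'}\eta=\eta$ for $\eta\in H'$, $S_{H'}=S_H^*$, $S_H=J_H\Delta_H^{1/2}$, and $\Delta_H^{it}H=H$, $\Delta_H^{it}H'=H'$.

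The heart of the proof is the relation $\Delta_H^{is}U(t)\Delta_H^{-is}=U(e^{-2\pi s}t)$ for $t\geq0$. Fix $t\geq0$ and $\xi\in H$. For real $s$ one has $\Delta_H^{-is}\xi\in H$, hence, by the hypothesis $U(t)H\subseteq H$, also $U(t)\Delta_H^{-is}\xi\in H\subseteq D(\Delta_H^{1/2})$; this is precisely where the inclusion hypothesis enters. I would then study the two-variable function $\Xi(s_1,s_2):=\Delta_H^{is_1}U(t)\Delta_H^{-is_2}\xi$: by (b) applied to $U(t)\Delta_H^{-is_2}\xi\in H$ it is holomorphic in $s_1$ on $\{-\tfrac12<\Im s_1<0\}$ for each fixed real $s_2$, by (b) applied to $\xi\in H$ it is holomorphic in $s_2$ on $\{0<\Im s_2<\tfrac12\}$ for each fixed real $s_1$, and it is bounded and continuous up to the real boundary of each strip; a flat-tube (Malgrange--Zerner) / edge-of-the-wedge argument then produces its joint holomorphic extension to the tube $\{-\tfrac12<\Im s_1<0,\ 0<\Im s_2<\tfrac12\}$. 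Combining this with the half-plane holomorphy of the translation variable from (a) and the constraint that $U(t)H\subseteq H$ places on the interaction of these variables, one propagates $\Xi$ to a neighbourhood of the diagonal $s_1=s_2=s$ inside $\{-\tfrac12\leq\Im s\leq0\}$, where its boundary value at $\Im s=0$ is $\Delta_H^{is}U(t)\Delta_H^{-is}\xi$ and, by the $J_H$-twist of (b) together with (a), its boundary value at $\Im s=-\tfrac12$ agrees with that of $s\mapsto U(e^{-2\pi s}t)\xi$ (this last function extends to the same strip by (a), since $e^{-2\pi s}t$ lies in the closed upper half-plane for $\Im s\in[-\tfrac12,0]$). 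Two bounded holomorphic functions on the strip that share their $\Im s=-\tfrac12$ boundary values coincide, so $\Delta_H^{is}U(t)\Delta_H^{-is}\xi=U(e^{-2\pi s}t)\xi$ for all real $s$ and all $\xi\in H$; as $H$ is cyclic this yields the operator identity for $t\geq0$, $s\in\RR$, and taking adjoints (all operators bounded, $U(-t)=U(t)^*$) extends it to all $t,s\in\RR$.

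For the reflection relation, rewrite the commutation relation as $\Delta_H^{is}U(t)=U(e^{-2\pi s}t)\Delta_H^{is}$ (bounded operators, real $s$), fix $t\geq0$, $\xi\in H$, and apply both sides to $\xi$. The left-hand vector $s\mapsto\Delta_H^{is}U(t)\xi$ extends, by (b) applied to $U(t)\xi\in H$, to a bounded continuous function on $\{-\tfrac12\leq\Im s\leq0\}$, holomorphic inside, with boundary value $J_H\Delta_H^{i\Re s}U(t)\xi$ at $\Im s=-\tfrac12$; the right-hand vector $s\mapsto U(e^{-2\pi s}t)\Delta_H^{is}\xi$ extends, by (a) and (b), to the same strip, with boundary value $U(-e^{-2\pi\Re s}t)J_H\Delta_H^{i\Re s}\xi$ at $\Im s=-\tfrac12$. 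The two functions agree on $\Im s=0$, hence on the whole closed strip; equating their values at $s=-\tfrac{i}{2}$ gives $J_HU(t)\xi=U(-t)J_H\xi$ for every $\xi\in H$, which, using the anti-linearity of $J_H$ and the linearity of $U(-t)$, extends from $H$ to $H+iH$ and then, by density and boundedness, to $J_HU(t)=U(-t)J_H$, i.e.\ $J_HU(t)J_H=U(-t)$, for $t\geq0$; the case $t\leq0$ follows by taking adjoints.

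The main obstacle is the second step: manufacturing a single holomorphic object whose boundary data encode the commutation relation. The obvious candidate $s\mapsto\Delta_H^{is}U(t)\Delta_H^{-is}\xi$ does not visibly extend off $\RR$, because the inner factor $\Delta_H^{-is}\xi$ leaves $H$, and with it the domain of $\Delta_H^{\pm1/2}$; it is exactly the hypothesis $U(t)H\subseteq H$ (so that $U(t)\xi\in H\subseteq D(\Delta_H^{1/2})$, and dually $U(-t)H'\subseteq H'$) that breaks the deadlock, but converting this into an actual analytic continuation is what forces the two-variable, edge-of-the-wedge set-up together with uniform Phragm\'en--Lindel\"of bounds. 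Once this analytic core is established, the extension to all $t$ and the reflection relation are routine.
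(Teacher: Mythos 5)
First, a point of reference: the paper does not prove Theorem \ref{Borch} — it is imported from \cite{L,LN} — so there is no internal proof to compare with; the relevant benchmark is the Borchers/Florig/Longo argument. Your peripheral steps are correct: the reduction of the case $-P>0$ to $P>0$ via $U(-t)H'\subset H'$, $\Delta_{H'}=\Delta_H^{-1}$, $J_{H'}=J_H$; facts (a) and (b), including the boundary value $J_H\Delta_H^{i\Re s}\xi$ at $\Im s=-\tfrac12$; and the derivation of $J_HU(t)J_H=U(-t)$ from the commutation relation by continuing both sides of $\Delta_H^{is}U(t)\xi=U(e^{-2\pi s}t)\Delta_H^{is}\xi$ to $s=-\tfrac{i}{2}$.

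The gap sits in the only step that carries the content of the theorem. The flat-tube (Malgrange--Zerner) theorem applied to the two flat tubes you describe, $\{\Im s_1\in(-\tfrac12,0),\ \Im s_2=0\}$ and $\{\Im s_1=0,\ \Im s_2\in(0,\tfrac12)\}$, gives holomorphy of $\Xi$ only on the tube over the convex hull of their bases, i.e.\ the triangle with vertices $(0,0)$, $(-\tfrac12,0)$, $(0,\tfrac12)$ in the $(\Im s_1,\Im s_2)$-plane. That triangle meets the diagonal $\Im s_1=\Im s_2$ only at the origin, so the set you need, $\{s_1=s_2=s,\ -\tfrac12\le\Im s\le0\}$, is never reached. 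The sentence about ``combining this with the half-plane holomorphy of the translation variable and the constraint that $U(t)H\subseteq H$ places on the interaction of these variables'' to ``propagate $\Xi$ to a neighbourhood of the diagonal'' is exactly the assertion of Borchers' theorem, not an argument for it: no mechanism is supplied, and the soft tools you invoke do not supply one. The identification of the $\Im s=-\tfrac12$ boundary value of $\Xi$ with that of $s\mapsto U(e^{-2\pi s}t)\xi$ is likewise asserted rather than derived. The known proofs bypass the two-variable diagonal entirely: for $\xi\in H$, $\eta\in H'$, $t\ge0$ one forms explicit one-variable functions such as $f(z)=\langle\Delta_H^{i\bar z}\eta,\,U(e^{-2\pi z}t)\,\Delta_H^{iz}\xi\rangle$, bounded and holomorphic on $\{-\tfrac12<\Im z<0\}$ (positivity of $P$ enters through the contractivity of $U(w)$ for $\Im w\ge0$, and $H\subset D(\Delta_H^{1/2})$, $H'\subset D(\Delta_H^{-1/2})$ through the spectral calculus), together with a companion function on the adjacent strip built from $J_H\xi\in H'$, $J_H\eta\in H$ and the derived inclusion $U(-t)H'\subset H'$; the relations $\Delta_H^{1/2}\xi=J_H\xi$ and $\Delta_H^{-1/2}\eta=J_H\eta$ match the boundary values on the lines $\Im z=\mp\tfrac12$, the glued function continues to a bounded entire function, hence is constant by Liouville, and that constancy, read off on the real line and using cyclicity of $H$ and $H'$, is the commutation relation. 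Your sketch assembles the right ingredients but omits the construction that makes them interact; as written it does not establish \eqref{eq:tradil}.
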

In other words, the last theorem claims that if there is a one-parameter unitary group $t\mapsto U(t)$, with a positive generator, which properly translates a standard subspace, then \eqref{eq:tradil} establishes the unique (up to multiplicity), positive energy representation of the translation-dilation group.

The converse of the Borchers theorem can be stated in the following way.
\begin{theorem}\label{thm:boconv}{\rm \cite{BGL}.}
Let $H$ be a standard space in the Hilbert space $\H$ and $U(t) = e^{i t P}$
a one-parameter group of unitaries on $\H$ satisfying:
$$\Delta_H^{is}U(t)\Delta_H^{-is}=U(e^{\mp2\pi s}t)\quad\text{and}\quad J_HU(t)J_H=U(-t),\qquad\forall t,s\in\R$$

The following are equivalent:
\begin{enumerate} 
\item $U(t)H \subset H$ for $ t\geq 0$;
\item $\pm P$ is positive.
\end{enumerate}
\end{theorem}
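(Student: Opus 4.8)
The two implications are of different character: $(2)\Rightarrow(1)$ carries the analytic content, while $(1)\Rightarrow(2)$ is comparatively soft. I would treat $(2)\Rightarrow(1)$ first. It suffices to show, for each $\xi\in H$ and $t\ge0$, that $U(t)\xi\in H$; since $H=\ker(\eins-S_H)$ and $S_H=J_H\Delta_H^{1/2}$ this is equivalent to $U(t)\xi\in\mathrm{dom}\,\Delta_H^{1/2}$ together with $\Delta_H^{1/2}U(t)\xi=J_HU(t)\xi$. Recall that for $\xi\in H$ one already has $\xi\in\mathrm{dom}\,\Delta_H^{1/2}$ and $\Delta_H^{1/2}\xi=J_H\xi$ (this is just $S_H\xi=\xi$).

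The plan is a two-factor analytic continuation of the first commutation relation on the closed strip $\Sigma=\{z:-\tfrac12\le\Im z\le0\}$. Fix $\xi\in H$ and $t\ge0$. On $\Sigma$ one has, first, the $\H$-valued map $z\mapsto\Delta_H^{iz}\xi$, which is bounded, continuous on $\Sigma$ and holomorphic on its interior, with $\Delta_H^{i0}\xi=\xi$ and $\Delta_H^{i(-i/2)}\xi=\Delta_H^{1/2}\xi=J_H\xi$ (all immediate from the spectral theorem, using $\xi\in\mathrm{dom}\,\Delta_H^{1/2}$ and invertibility of $\Delta_H$); and, second, the family of contractions $z\mapsto U(e^{\mp2\pi z}t)$: here one uses $t\ge0$ and $\pm P\ge0$ to see that, since $\arg(e^{\mp2\pi z})=\mp2\pi\Im z\in[0,\pi]$ for $z\in\Sigma$, the point $e^{\mp2\pi z}t$ stays in the closed half-plane on which $w\mapsto U(w)=e^{iwP}$ is a strongly continuous family of contractions, strongly holomorphic in the interior. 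Their product $F(z):=U(e^{\mp2\pi z}t)\,\Delta_H^{iz}\xi$ is then an $\H$-valued function, bounded and continuous on $\Sigma$ and holomorphic on the interior, and for real $z=s$ the commutation relation gives $F(s)=U(e^{\mp2\pi s}t)\Delta_H^{is}\xi=\Delta_H^{is}U(t)\xi$. Hence $F$ is the bounded holomorphic extension of $s\mapsto\Delta_H^{is}U(t)\xi$ to $\Sigma$, and the standard analyticity criterion for $\mathrm{dom}\,\Delta_H^{1/2}$ (a vector $\eta$ lies in it iff $s\mapsto\Delta_H^{is}\eta$ admits such an extension, whose value at $-i/2$ is then $\Delta_H^{1/2}\eta$) yields $U(t)\xi\in\mathrm{dom}\,\Delta_H^{1/2}$ and $\Delta_H^{1/2}U(t)\xi=F(-i/2)=U(e^{\pm\pi i}t)J_H\xi=U(-t)J_H\xi$. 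Using the second commutation relation in the form $J_HU(-t)J_H=U(t)$, this gives $S_HU(t)\xi=J_H\Delta_H^{1/2}U(t)\xi=J_HU(-t)J_H\xi=U(t)\xi$, i.e.\ $U(t)\xi\in H$.

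For $(1)\Rightarrow(2)$, assume $U(t)H\subseteq H$ for $t\ge0$. The set $\cI:=\{t\in\R:U(t)H\subseteq H\}$ is a closed additive sub-semigroup of $\R$ containing $[0,\infty)$, and the first commutation relation makes it invariant under all dilations $t\mapsto e^{\mp2\pi s}t$, since $U(e^{\mp2\pi s}t)H=\Delta_H^{\pm is}U(t)\Delta_H^{\mp is}H=\Delta_H^{\pm is}U(t)H\subseteq H$. Hence $\cI$ is $[0,\infty)$ or all of $\R$. If $\cI=\R$ then $U(t)H=H$ for every $t$, so by Lemma \ref{lem:sym} each $U(t)$ commutes with $\Delta_H^{is}$; comparing with the commutation relation gives $U(t)=U(e^{\mp2\pi s}t)$ for all $s,t$, forcing $U\equiv\eins$ by strong continuity and hence $P=0$. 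If $\cI=[0,\infty)$, I would decompose the translation–dilation representation generated by $\{U(t),\Delta_H^{is}\}$ into its $P>0$, $P=0$ and $P<0$ parts $\H=\H_+\oplus\H_0\oplus\H_-$: this decomposition is preserved by every $\Delta_H^{is}$ and the two nonzero-energy parts are exchanged by $J_H$, so $H$ splits compatibly as $H_0\oplus H_1$ along $\H_0\oplus(\H_+\oplus\H_-)$ with $H_0,H_1$ standard in the respective spaces; the summand $\H_0$ carries $P=0$ and obstructs nothing, and on the remaining summand one checks, using the representation theory of the $ax+b$ group together with the Hardy-space argument of the previous paragraph run on the invertible part, that the inclusion forces the negative-energy subspace to vanish, i.e.\ $\pm P\ge0$.

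The main obstacle is the composition step in $(2)\Rightarrow(1)$, which is also where positivity of $\pm P$ and the restriction $t\ge0$ are genuinely used: one must verify that $z\mapsto U(e^{\mp2\pi z}t)$ is a uniformly bounded, strongly holomorphic family of operators on the interior of $\Sigma$, continuous up to the boundary, that $z\mapsto\Delta_H^{iz}\xi$ is norm-holomorphic and bounded there, that their product is therefore a bona fide $\H$-valued holomorphic function on $\Sigma$, and finally that the Hardy-class characterisation of $\mathrm{dom}\,\Delta_H^{1/2}$ applies to identify $F(-i/2)$ with $\Delta_H^{1/2}U(t)\xi$. Each of these is standard (and the last is most cleanly proved by approximating $\xi$ by vectors in bounded spectral subspaces of $\log\Delta_H$, where all operators involved are bounded and entire, and then using closedness of $\Delta_H^{1/2}$), but together they are the substance of the theorem; everything else is algebra.
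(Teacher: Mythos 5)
The paper states this theorem as a quotation from \cite{BGL} and contains no proof of its own, so there is no internal argument to compare against; I am judging your proposal on its merits. Your proof of $(2)\Rightarrow(1)$ is correct and is the standard one: the two-factor analytic function $F(z)=U(e^{\mp2\pi z}t)\Delta_H^{iz}\xi$ on the strip $-\tfrac12\le\Im z\le 0$, the observation that positivity of $\pm P$ and $t\ge0$ keep $e^{\mp2\pi z}t$ in the half-plane where $w\mapsto U(w)$ is a contractive holomorphic semigroup, the boundary identification $F(s)=\Delta_H^{is}U(t)\xi$, and the evaluation $F(-i/2)=U(-t)J_H\xi$ combined with $J_HU(-t)J_H=U(t)$ to get $S_HU(t)\xi=U(t)\xi$. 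The technical points you flag (strong holomorphy of the product, the Hardy-type characterisation of $\mathrm{dom}\,\Delta_H^{1/2}$) are indeed the only delicate ones and your proposed treatment of them is adequate.

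The gap is in $(1)\Rightarrow(2)$. First, your claim that $J_H$ \emph{exchanges} the positive- and negative-energy subspaces is false: since $J_H$ is anti-unitary, $J_HU(t)J_H=U(-t)$ gives $J_HPJ_H=P$, so $J_H$ \emph{preserves} each spectral subspace of $P$. Under your (incorrect) claim $S_H$ would not commute with $E_+$ and $E_-$ separately, $H$ would only split along $\H_0\oplus(\H_+\oplus\H_-)$, and your concluding sentence --- ``one checks, using the representation theory of the $ax+b$ group together with the Hardy-space argument \dots that the inclusion forces the negative-energy subspace to vanish'' --- is precisely the assertion to be proved, not a proof of it. The correct fact closes the argument cleanly: the spectral projections $E_+,E_0,E_-$ of $P$ commute with every $\Delta_H^{is}$ (because $\Delta_H^{is}P\Delta_H^{-is}=e^{\mp2\pi s}P$ preserves the sign of the spectrum) and with $J_H$, hence with $S_H$, so $H=E_+H\oplus E_0H\oplus E_-H$ with each nonzero summand standard in the corresponding subspace. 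On $\H_-:=E_-\H$ the group $V(t):=U(-t)|_{\H_-}$ has strictly positive generator and satisfies the same commutation relations with $\Delta_{E_-H}^{is}=\Delta_H^{is}|_{\H_-}$ and $J_{E_-H}$; by your $(2)\Rightarrow(1)$, $V(t)E_-H\subset E_-H$ for $t\ge0$, which together with the hypothesis gives $U(t)E_-H=E_-H$ for all $t\in\R$. Lemma \ref{lem:sym} then makes $U(t)|_{\H_-}$ commute with $\Delta_H^{is}|_{\H_-}$, so $U(t)|_{\H_-}=U(e^{\mp2\pi s}t)|_{\H_-}$ for all $s$, and letting $e^{\mp2\pi s}\to0$ gives $U(t)|_{\H_-}=\eins$, i.e.\ $P|_{\H_-}=0$, contradicting the definition of $\H_-$ unless $\H_-=\{0\}$. (Your preliminary dichotomy for $\cI=\{t:U(t)H\subseteq H\}$ is correct but is not needed once this is in place.)
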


\subsection{The Minkowski space and the Poincar\'e group}\label{sect:MP}
Let $\R^{1+3}$ be the Minkowski space, i.e. a four-dimensional real manifold endowed with the metric $$(x,y)=x_0y_0-\sum_{i=1}^3x_iy_i.$$
In a 4-vector  $x=(x_0,x_1,x_2,x_3)$   $x_0$ and $\{x_i\}_{i=1,2,3}$ are the time and space coordinates, respectively. The Minkowski space has a causal structure induced by the metric. The causal complement of a region $O$ is given by $O'=\{x\in\RR^{1+3}: (x-y)^2<0, \forall y\in O \},$ where $(x-y)^2=(x-y,x-y)$ refers to the norm induced by the metric. A {\bf causally closed} region is such that $O=O''$.

We shall denote with $\cP$ the {\bf Poincar\'e group}, i.e the inhomogeneous symmetry group of $\RR^{1+3}$. It is the semidirect product of the Lorentz group $\cL$, the homogeneous Minkowski symmetry group, and the $\R^4$-translation group , i.e. $\cP=\R^4\rtimes\cL$. It has four connected components, as $\cL$ has four connected components, and we shall indicate with $\Poi=\R^4\rtimes\cL_+^\uparrow$ the connected component of the identity. One usually  refers to $\Poi$ as the {\bf proper ortochronous} (connected component of the) {\bf Poincar\'e group}.  $\mathcal L_+^\uparrow$  is not simply connected. The $\mathcal{L}_+^\uparrow$ universal covering $\tilde{\mathcal{L}}_+^\uparrow$ is $\SLC$, thus the $\Poi$-universal covering $\tPoi$ is isomorphic to $\RR^4\rtimes \SLC$. Let $\Lambda:\R^4\rtimes\SLC\ni (a,A)\longmapsto\Lambda(a,A)\in \Poi$ be the \textbf{covering map}.

Unitary positive energy representations of the (universal covering of the) Poincar\'e group belong to three families: massive, massless finite helicity and massless infinite spin. Massive representations are labelled by a mass parameter $m\in (0,+\infty)$ and a spin parameter $s\in\frac\NN2$; massless finite helicity and infinite spin representations have zero mass and are  labelled by an helicity parameter $h\in\frac\ZZ2$ and a couple $(\kappa,\epsilon)$ where $\kappa\in \R^+$ is the radius and $\epsilon\in\{0,\frac12\}$ the bose/fermi alternative, respectively. 

We shall indicate with $\cP_+$ the subgroup of $\cP$ generated by $\Poi $ and the space and time reflection $\theta$. Consider the automorphism $\alpha$ of the Poincar\'e group $\theta$ generated by the adjoint action of the $\theta$-reflection. The {\bf proper Poincar\'e group} $\cP_+ $ is generated as a semidirect product  $$\Poi\rtimes_\alpha \ZZ_2$$ through the $\alpha$-action. It is  well known (see for example \! \cite{var}) that any irreducible representation of the Poincar\'e group $U$, except for finite helicity, extends to an (anti-)unitary representation of $\tilde U$ of $\tPoi$, i.e.
$$\tilde U(g)\;\text{is}\;\left\{
\begin{array}{ll}
\text{is unitary}&g\in\tPoi \\ 
\text{is anti-unitary}&g\in\tilde\cP_+^\downarrow \dot\,= \,\theta\cdot\Poi
\end{array}\right.$$

At this point it is necessary to fix some notations about regions and isometries of the Minkowski spacetime. A {\bf wedge-shaped region} $W\subset\RR^{1+3}$ is an open region of the form  $gW_1$ where $g\in\Poi$ and $W_1=\{x\in\RR^{1+3}: |x_0|<x_1\}$.    The set of wedges is  denoted by $\W$. Let $\W_0\subset\W$ be the subset of wedges the form $gW_1$ where $g\in \cL_+^\uparrow$. Note that if  $W\in\W$ (or $W\in\W_0$), then $W'\in\W$ (resp. $W'\in\W_0$). 
For every wedge region $W\in\W$ there exists a unique one-parameter group of Poincar\'e boosts  $t\mapsto\Lambda_W(t)$ preserving $W$, i.e. $\Lambda_W(t) W=W$ for every $t\in\RR$. It is defined by the adjoint action of a $g\in\Poi$ such that $gW_1=W$ on $\Lambda_{W_1}$, where $\Lambda_{W_1}(t)x=(x_0\cosh t + x_1\sinh t,x_0\sinh t + x_1\cosh t, x_2,x_3 )$. Let $t\mapsto\lambda_W(t)$ be the (unique, one-parameter group) lift to the covering group.
We shall denote with $W_\alpha\in\W$ the wedge along $x_\alpha$-axis, i.e. $W_\alpha=\{x\in\RR^{1+3}: |x_0|<x_\alpha\}$ with $\alpha=1,2,3$. Let  $t\mapsto \Lambda_\alpha(t)$ and $\theta\mapsto R_\alpha(\theta)$ be respectively the boosts and the rotations of $\cP_+^\uparrow$ fixing $W_\alpha$ and $t\mapsto\lambda_j(t)$ and $\theta\mapsto r_j(\theta)$ be the corresponding lifts to $\SLC$. Note that  $\lambda_j(t)=e^{\sigma_jt/2}$ and $r_j(\theta)=e^{i\sigma_j\theta/2}$,  where $\{\sigma_i\}_{i=1,2,3}$ are the Pauli matrices. 

\subsection{One-particle nets}

Let $U$ be a unitary positive energy representation  of the Poincar\'e group $\tPoi$ on  an Hilbert space $\H$. 
We shall call a \textbf{$U$-covariant (or Poincar\'e covariant) net of standard subspaces on wedges}  a map $$H:\W\ni W\longmapsto H(W)\subset\H,$$
associating to every wedge in $\RR^{1+3}$ a closed real linear subspace of $\H$, satisfying the following properties:
 \begin{enumerate}
\item \textbf{Isotony:}  If $W_1,W_2\in\W$ and $W_1\subset W_2$ then $H(W_1)\subset H(W_2)$;
\item \textbf{Poincar\'e Covariance:}   $U(g)H(W)=H(gW),$ $\forall g\in\tPoi,\,\forall W\in\W$;
\item \textbf{Positivity of the energy:} the joint spectrum of translations in $U$ is contained in the forward light cone $V_+=\{x\in\RR^{1+3}: x^2=(x,x)\geq0\textrm{ and } x_0\geq0\}$
\item \textbf{Reeh-Schlieder property (R-S):} if $W\in \W$, then  $H(W)$ is a cyclic subspace of $\H$;
\item \textbf{Twisted Locality:}  there exists a self-adjoint unitary operator $\Gamma\in U(\tilde\cP_+^\uparrow)'$, s.t. $\Gamma H(W)=H(W)$ for any $ W\in\cW$ and if $W_1\subset W_2'$ then $$ B H(W_1)\subset H(W_2)',$$ with $\displaystyle{ B=\frac{1+i\Gamma}{1+i}}.$\end{enumerate} 

We shall indicate a $U$-covariant net  of standard subspaces on wedges  $W\mapsto H(W)$   satisfying 1.-5.  with the couple $(U,H).$
This is the setting we are going to study the following two  properties:

\begin{enumerate}
\item[6.]\textbf{ Bisognano-Wichmann property}: if $W\in\W$, then \begin{equation}\label{eq:biwitwist}
U(\lambda_W(2\pi t))=\Delta^{-it}_{H(W)},\qquad\forall t\in\RR;\end{equation}
\item[7.]\textbf{ Duality property}: if $W\in\W$, then $H(W)'=BH(W')$.
\end{enumerate}
Clearly, if $U$ factors through $\Poi$ then the two expressions of the B-W property \eqref{eq:biwi} and  \eqref{eq:biwitwist} coincide.

The relations between the modular theory of the wedge subspaces and the twisted operator are expressed by the following proposition.
 \begin{proposition}\label{prop:B} The following hold
$$[\Delta_{H(W)},B]=0$$
$$J_{H(W)}BJ_{H(W)}=B^*$$
\end{proposition}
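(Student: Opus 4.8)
The plan is to exploit the fact that $B = \tfrac{1+i\Gamma}{1+i}$ is built entirely from the self-adjoint unitary $\Gamma$, together with the two structural hypotheses on $\Gamma$ in the twisted locality axiom: that $\Gamma \in U(\tilde{\mathcal P}_+^\uparrow)'$ and that $\Gamma H(W) = H(W)$ for every $W$. The first identity $[\Delta_{H(W)}, B] = 0$ should reduce to $[\Delta_{H(W)}, \Gamma] = 0$, since $B$ is a (bounded Borel) function of $\Gamma$; and the second identity should follow once we know how the modular conjugation $J_{H(W)}$ interacts with $\Gamma$.

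First I would handle $\Gamma$ and $\Delta_{H(W)}$. Since $\Gamma$ is a unitary preserving the standard subspace $H(W)$ (here I use that $\Gamma$ is self-adjoint and unitary, hence $\Gamma^2 = 1$, so $\Gamma H(W) = H(W)$ is equivalent to $\Gamma H(W) \subseteq H(W)$ and we genuinely have an equality), Lemma \ref{lem:sym} applied with $U = \Gamma$ and $K = H(W)$ gives $\Gamma \Delta_{H(W)} \Gamma^* = \Delta_{H(W)}$ and $\Gamma J_{H(W)} \Gamma^* = J_{H(W)}$, i.e. $\Gamma$ commutes with both $\Delta_{H(W)}$ and $J_{H(W)}$. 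From $[\Gamma, \Delta_{H(W)}] = 0$ and the fact that $B$ is a norm-continuous function of $\Gamma$ (explicitly $B = \tfrac12\bigl((1-i) + (1+i)\Gamma\bigr)$ after rationalizing, up to the obvious arithmetic), it follows at once that $[\Delta_{H(W)}, B] = 0$, and also $[J_{H(W)}, \Gamma] = 0$ will be used below.

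For the second identity, I would compute $J_{H(W)} B J_{H(W)}$ directly. Writing $B = \alpha \eins + \beta \Gamma$ with $\alpha = \tfrac{1-i}{2}$, $\beta = \tfrac{1+i}{2}$ (so that indeed $\tfrac{1+i\Gamma}{1+i} = \tfrac{(1+i\Gamma)(1-i)}{2} = \tfrac{1-i}{2} + \tfrac{1+i}{2}\Gamma$), and using that $J_{H(W)}$ is \emph{anti}-unitary — so it conjugates complex scalars — together with $J_{H(W)} \Gamma J_{H(W)} = \Gamma$ from the previous paragraph and $J_{H(W)}^2 = \eins$, I get
$$
J_{H(W)} B J_{H(W)} = \bar\alpha \eins + \bar\beta\, J_{H(W)} \Gamma J_{H(W)} = \bar\alpha \eins + \bar\beta \Gamma = \tfrac{1+i}{2}\eins + \tfrac{1-i}{2}\Gamma,
$$
which is exactly $B^*$ (since $B^* = \bar\alpha \eins + \bar\beta \Gamma$ as $\Gamma = \Gamma^*$). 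This completes the argument.

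The only mild subtlety — and the place I would be most careful — is the very first step: verifying that $\Gamma H(W) = H(W)$ together with $\Gamma^2 = \eins$ really does let me invoke Lemma \ref{lem:sym}, and that the lemma is being applied to a genuine standard subspace. Since $H(W)$ is standard by hypothesis (it is cyclic by Reeh--Schlieder, and separating because twisted locality forces $B H(W) \subseteq H(W')'$ and hence $H(W) \cap iH(W) = \{0\}$ after untwisting), this is fine, but it is worth noting explicitly that the whole proposition is really just the statement that $\Gamma$ lies in the modular data of $H(W)$, propagated through the elementary algebra defining $B$.
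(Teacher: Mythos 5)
Your proof is correct and is essentially the paper's own argument: the paper likewise applies Lemma \ref{lem:sym} with $U=\Gamma$ (using $\Gamma H(W)=H(W)$) to get $\Gamma\Delta_{H(W)}\Gamma^*=\Delta_{H(W)}$ and $\Gamma J_{H(W)}\Gamma^*=J_{H(W)}$, and then calls the rest ``a straightforward computation.'' You have merely written out that computation explicitly via $B=\tfrac{1-i}{2}\eins+\tfrac{1+i}{2}\Gamma$ and the anti-linearity of $J_{H(W)}$, which is exactly what is intended.
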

\begin{proof}
As $\Gamma H(W)=H(W)$ then, by Lemma \ref{lem:sym}, $\Gamma\Delta_{H(W)}\Gamma^*=\Delta_{H(W)}$ and $\Gamma J_{H(W)}\Gamma^*=J_{H(W)}$. A straightforward computation concludes the argument.
\end{proof}
\begin{proposition}\label{ssd}Wedge duality is consequence of the B-W property.\end{proposition}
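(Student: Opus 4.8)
The plan is to establish the two inclusions in $H(W)' = BH(W')$ separately, the nontrivial one being a direct application of the standard-subspace Takesaki theorem, Lemma~\ref{inc}. The inclusion $BH(W')\subseteq H(W)'$ is immediate from twisted locality (property~5 with $W_1=W'$, $W_2=W$, the requirement $W'\subseteq W'$ being trivial). So the content of the statement is the reverse inclusion $H(W)'\subseteq BH(W')$, and for this I would proceed as follows.

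First I would record that every $H(W)$, $W\in\W$, is standard: cyclicity is the Reeh--Schlieder property, while separation follows from twisted locality, since applying property~5 with $W_1=W$, $W_2=W'$ (using that wedges are causally closed) gives $BH(W)\subseteq H(W')'$, and $H(W')'$ is separating because $H(W')$ is cyclic; as $B$ is unitary, $H(W)$ is separating too. Hence $H(W)'$ is also a standard subspace, with modular group $\Delta_{H(W)'}^{it}=\Delta_{H(W)}^{-it}$.

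Next I would verify that $K:=BH(W')$ is a closed real subspace of the standard subspace $H(W)'$ that is invariant under the modular group $\Delta_{H(W)'}^{it}$. The inclusion $K\subseteq H(W)'$ is twisted locality, as above. For the invariance, the Bisognano--Wichmann property \eqref{eq:biwitwist} gives $\Delta_{H(W)'}^{it}=\Delta_{H(W)}^{-it}=U(\lambda_W(2\pi t))$; since $\Lambda_W(2\pi t)$ fixes $W$ it fixes $W'$, so by Poincar\'e covariance $U(\lambda_W(2\pi t))H(W')=H(\Lambda_W(2\pi t)W')=H(W')$. Combining this with $[\Delta_{H(W)},B]=0$ from Proposition~\ref{prop:B}, one gets $\Delta_{H(W)'}^{it}K=B\,U(\lambda_W(2\pi t))H(W')=BH(W')=K$. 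Now Lemma~\ref{inc} applies with $H=H(W)'$ and this $K$: since $K$ is $\Delta_{H(W)'}^{it}$-invariant and moreover cyclic in $\H$ — because $W'$ is a wedge, so $H(W')$ is cyclic by Reeh--Schlieder and $B$ is unitary, whence $\overline{K+iK}=B\,\overline{H(W')+iH(W')}=\H$ — the lemma forces $H(W)'=K=BH(W')$, which is wedge duality.

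I do not expect a serious obstacle: the argument is short once the pieces are in place. The one step that carries the weight is the invariance of $BH(W')$ under the modular group of $H(W)'$, which is exactly where the Bisognano--Wichmann identity is essential (it is what makes the modular flow geometric, i.e.\ equal to a boost fixing $W'$), together with the commutation $[\Delta_{H(W)},B]=0$ of Proposition~\ref{prop:B} needed to commute $B$ past the modular flow.
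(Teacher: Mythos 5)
Your proposal is correct and follows essentially the same route as the paper: twisted locality gives the inclusion $BH(W')\subset H(W)'$, the B-W identity together with covariance gives invariance of $H(W')$ (hence, via Proposition~\ref{prop:B}, of $BH(W')$) under the modular group, and Lemma~\ref{inc} then forces equality since $BH(W')$ is cyclic. You have merely spelled out the intermediate verifications (standardness of $H(W)$, the commutation of $B$ with the modular flow) that the paper leaves implicit.
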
 
\begin{proof}
By Proposition \ref{prop:B} $\Delta_{H(W)}=\Delta_{BH(W)}$ and by covariance $$ H(W')=U(\Lambda_W(-2\pi t))H(W')=\Delta_{H(W)}^{it} H(W').$$  By twisted locality $BH(W')\subset H(W)'$ and Lemma \ref{inc} we get the thesis.
\end{proof}

It is possible to define closed real linear subspaces associated with bounded causally closed regions as follows 
\begin{equation}\label{HO}
H(O)\dot=\bigcap_{\W\ni W\supset O}H(W).
\end{equation}
This defines a net of real subspaces on causally closed regions $O\mapsto H(O)$. 
Note that $H(O)$ is not necessarily cyclic. If $H$ is a net satisfying 1.-7. assumptions and $H(O)$ is cyclic, then
\[
H(W) = \overline{\sum_{O\subset W} H(O)}
\]
by Lemma \ref{inc}. If $H(O)$ is cyclic for any double cone $O$, we say that the net $O\mapsto H(O)$ satisfies the {\bf R-S  property for double cones}.

In \cite{BGL}, Brunetti, Guido and Longo showed that there is a 1-1 correspondence between (anti-)unitary, positive energy representations of $\cP_+$ and covariant nets of standard subspaces satisfying the B-W property. For the sake of completeness, we recall their theorem and we present the proof in the fermionic case which is not contained in the original paper.

\begin{theorem}\label{thm:bgl}{\rm\cite{BGL}.}
There is a 1-1 correspondence between:
\begin{itemize}
\item[a.] (Anti-)unitary positive energy representations of $\tilde\cP_+$.
\item[b.] Local nets of standard subspaces  satisfying 1-7.
\end{itemize}
\end{theorem}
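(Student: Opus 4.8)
The plan is to establish the two directions of the correspondence separately, and to reduce the fermionic case to the bosonic one by a standard twisting argument so that the bulk of the content is really the Brunetti–Guido–Longo construction already in the literature.

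First I would recall the construction from (a) to (b). Given an (anti-)unitary positive energy representation $U$ of $\tilde\cP_+$, one defines, for each wedge $W\in\W$, the pair $(\Delta_{H(W)},J_{H(W)})$ by the prescription $\Delta_{H(W)}^{it}\dot= U(\lambda_W(-2\pi t))$ and $J_{H(W)}\dot= ZU(r_W)$, where $r_W$ is the $\theta$-like reflection attached to $W$ and $Z$ is the appropriate Klein twist built from $\Gamma$ (with $\Gamma=1$ in the bosonic case). One checks the commutation relation $J_{H(W)}\Delta_{H(W)}J_{H(W)}=\Delta_{H(W)}^{-1}$ from the group relations in $\tilde\cP_+$ (the reflection conjugates the boost to its inverse), so by the $1$-$1$ correspondence of Proposition \ref{prop:11} this determines a unique standard subspace $H(W)$. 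Isotony and Poincar\'e covariance for $W\mapsto H(W)$ then follow from Lemma \ref{lem:sym} together with the geometric action of $\tilde\cP_+$ on $\W$; positivity of the energy is an input; the Reeh–Schlieder property follows because $H(W)+iH(W)$ is a $U$-invariant subspace under a nontrivial piece of the group and hence dense by irreducibility/spectral arguments; twisted locality is where the Klein twist $\Gamma$ and Theorem \ref{Borch} enter, controlling the relative position of $H(W)$ and $H(W')$ via the lightlike translations intertwining the two boosts. Finally the B-W property \eqref{eq:biwitwist} holds by construction, and Duality then follows from Proposition \ref{ssd}.

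Second, for the direction (b) to (a): given a net satisfying 1--7, one reads off from the B-W property that $t\mapsto\Delta_{H(W)}^{-it/2\pi}$ is a one-parameter group that, by Lemma \ref{lem:sym} and covariance, has the correct adjoint action on all the $H(W')$, and the modular conjugations $J_{H(W)}$ together with Duality supply the reflections. The point is to verify that these operators satisfy the defining relations of $\tilde\cP_+$: the commutation relations among boosts and reflections come from Theorem \ref{Borch} applied to the lightlike translations (this is exactly the Borchers-type argument that forces the translation–dilation relations), and the relations among boosts associated to different wedges are obtained by comparing modular groups of intersecting wedge subspaces. One must also check that the representation so produced is independent of choices and that composing the two constructions gives the identity in both directions; positivity of the energy of the reconstructed representation is guaranteed by the converse Borchers theorem, Theorem \ref{thm:boconv}, since by construction the positive lightlike translations preserve the relevant wedge subspaces.

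The fermionic case, which is the new content here, I would handle by the twisting trick: replace $H(W)$ by $BH(W)$ with $B=(1+i\Gamma)/(1+i)$, noting by Proposition \ref{prop:B} that $B$ commutes with $\Delta_{H(W)}$ and flips $J_{H(W)}$ to $J_{H(W)}^{*}$ conjugately, so that the twisted net $W\mapsto BH(W)$ satisfies honest (untwisted) locality while having the same modular groups; then apply the bosonic correspondence and untwist. The main obstacle, and the step I expect to require the most care, is precisely the bookkeeping of the grading operator $\Gamma$ throughout: making sure that $Z$ (or $B$) is chosen so that twisted locality in 5. matches untwisted locality for the twisted subspaces, that $\Gamma\in U(\tilde\cP_+^\uparrow)'$ is respected by all constructions, and that the anti-unitarity of $\tilde U$ on $\tilde\cP_+^\downarrow$ is correctly reproduced by the modular conjugations $J_{H(W)}$ (which are themselves anti-unitary) after the twist. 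Everything else is either cited from \cite{BGL} or is a direct application of Lemmas \ref{lem:sym}, \ref{inc} and Theorems \ref{Borch}, \ref{thm:boconv}.
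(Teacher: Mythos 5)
Your skeleton for the bosonic part of the direction (a)$\to$(b) is essentially the BGL construction the paper relies on: build $S_{H(W)}=J_W\Delta_W^{1/2}$ from $U(j_W)$ and $e^{2\pi K_W}$, invoke Proposition \ref{prop:11}, get covariance and B-W by construction. One misattribution there: isotony does \emph{not} follow from Lemma \ref{lem:sym} and the geometric action on $\W$; it is exactly the step that needs positivity of the energy together with the converse Borchers theorem (Theorem \ref{thm:boconv}), applied to the lightlike translations that compress a wedge into itself. That is citable from \cite{BGL}, so it is a slip rather than a fatal error.

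The genuine gap is your treatment of the fermionic case, which is precisely the content the paper's proof supplies. The proposed reduction --- replace $H(W)$ by $BH(W)$, claim the twisted net is honestly local, apply the bosonic correspondence, untwist --- does not work. Twisted locality says $BH(W_1)\subset H(W_2)'$ for $W_1\subset W_2'$, whereas locality of the twisted net would require $BH(W_1)\subset (BH(W_2))'=BH(W_2)'$; these differ by a factor $B^2=i\Gamma$, which in the fermionic case ($\Gamma=U(2\pi)=-\mathbf 1$, $B=-i\cdot\mathbf 1$) is $-i\cdot \mathbf 1$, so the rescaled net $W\mapsto -iH(W)$ is not local (it would force $iH(W)\subset H(W)$, contradicting separation). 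In fact no untwisted local net satisfying B-W and duality can be covariant under a fermionic $U$ at all: the relation $r_1(\pi)j_3r_1(\pi)^{-1}=-j_3$ in $\tilde\cP_+$ gives $U(R_1(\pi))J_3U(R_1(\pi))^*=-J_3$ when $U(2\pi)=-\mathbf 1$, and this is incompatible with $S_{H(W_3')}=S_{H(W_3)}^*=J_3\Delta_{W_3}^{-1/2}$. So the reduction to the bosonic case is impossible in principle, not just delicate. The paper's route is a direct computation: from the relations \eqref{eq:1}--\eqref{eq:2} one gets $S_{H(W')}=-J_W\Delta_{H(W)}^{-1/2}$, hence $H(W')=iH(W)'$, which simultaneously identifies the twist ($\Gamma=U(2\pi)$, $B=-i\cdot\mathbf 1$), yields twisted locality and duality, and resolves the sign ambiguity in the choice of $U(j_3)$ via $S_{H(W)}=S_{H(r(2\pi)W)}$ --- none of which appears in your proposal. (For the direction (b)$\to$(a), note also that $U$ on $\tilde\cP_+^\uparrow$ is part of the data in item b, so only the anti-unitary extension by $J_{H(W)}$ needs checking; reconstructing the boost relations by ``comparing modular groups of intersecting wedges'' is both unnecessary and far harder than you suggest.)
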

\begin{proof}
Consider the automorphism of the Poincar\'e group $\tcP$ generated by the adjoint action of $j_3$, one of the two $\tcP$ elements implementing the $x_0-x_3$ reflection. One can check in $\tilde\P_+$ that
\begin{equation}\label{eq:1}
j_3(a,A)j_3=(j_3a,\sigma_3A\sigma_3),\qquad \forall (a,A)\in\R^4\rtimes \SLC 
\end{equation} and 
\begin{equation}\label{eq:2}
r_1(\pi)\lambda_{3}(t)r_1(\pi)^{-1}=\lambda_3(-t)\qquad \mathrm{and\qquad }r_1(\pi)j_{3}r_1(\pi)^{-1}=-j_{3},
\end{equation}
(e.g. cf. Appendix in \cite{M}).
Consider an (irreducible) fermionic representation $U$ of $\tilde \cP_+$, namely a $\tilde\cP_+$-representation  which does not factor through $\cP_+$. In particular $U(2\pi)=-{\bf1}$. Since $U$ lifts to a representation of the Lie algebra of $\tPoi$ on the G\"arding domain and by relations \eqref{eq:1} and \eqref{eq:2} we get
\begin{equation}\label{eq:bgl1}U(R_1(\pi))K_{3}U(R_1(\pi))^*=-K_{3},\end{equation}
\begin{equation}\label{eq:bgl2}U(R_1(\pi))J_{3}U(R_1(\pi))^*=-J_{3},\end{equation}
where $U(\lambda_3(t))=e^{iK_3t}$ and  $J_3$ denotes $U(j_3)$ (choose one of the two possible choices for $U(j_3)$ in $\tcP$). The anti-unitary operator $J_{3}=U(j_3)$  and the self-adjoint operator $\Delta_{W_3}=e^{2\pi K_3}$ satisfy $J_3\Delta_{W_3}J_3=\Delta^{-1}_{W_3}$. Hence, it is possible to define an anti-unitary involution $S_{W_3}=J_{3}\Delta_{W_3}^{1/2}$, and, by Proposition \ref{prop:11},  a subspace $H(W_3)$ associated with the $W_3$ wedge.  Clearly $S_{W_3}$ is the Tomita operator $S_{H(W_3)}$ of the subspace $H(W_3)$. By covariance, a map of standard subspaces $\W\ni W\longmapsto H(W)\subset{\H}$ is well defined. Indeed, for any wedge $W$, $S_{H(W)}$ is the Tomita operator determining $H(W)$, defined as $S_{H(W)}=U(g)S_{H(W_3)}U(g)^*$, with $g\in\P_+^\uparrow$ such that $gW_3=W$. Note that $S_{H(W)}=S_{H(r(2\pi)W)}$ and $S_{H(W)}$ is well defined. This clarify the ambiguity in the choice of $J_3$. Furthermore, by covariance and relations \eqref{eq:bgl1} and \eqref{eq:bgl2}, as $S_{H(W)}=J_W\Delta_W^{1/2}$ then $S_{H(W')}=-J_W\Delta_{H(W)}^{-1/2}$. 
It easily follows that $H(W')=iH(W)'.$ This ensures twisted locality and duality as we can define $\Gamma\dot=U(2\pi)\in U(\tPoi)'$ and $B=\frac{1-i}{1+i}\cdot{\bf1}=-i\cdot{\bf1}$ is the twist operator, i.e. $H(W')=BH(W)'$ where $W\in\W.$

 Positivity of the energy, Poincar\'e covariance, B-W and R-S properties are ensured  by construction. Isotony follows as in \cite{BGL} by positivity of the energy and Theorem \ref{thm:boconv}. \end{proof}
%

\section{A modularity condition for the Bisognano-Wichmann property}

We define the following subgroups of $\tPoi$:
\begin{itemize}
\item $G_W^0\dot=\{A\in\SLC: \Lambda(A) W=W\}$, where $W\in\W_0$. It is the subgroup of $\tilde\cL_+^\uparrow$ elements fixing $W$ through the covering homomorphism $\Lambda$. 
\item  $G_W =\langle G_W^0,\T \rangle$,  with $W\in\W_0$, where $\T$ is the $\RR^{1+3}$-translation group. $G_W$ is the group generated by $G_W^0$ and $\T$.
\item For a general wedge  $W\in\W$, $G_W^0$ and $G_W$  are defined  by the transitive action of $\Poi$ on wedges.
\end{itemize}  

Let $W\in\W$. Consider the strongly continuous map 
\begin{equation}\label{eq:Z}Z_{H(W)}:\RR\ni t\mapsto \Delta_{H(W)}^{it}U(\lambda_W(2\pi t)).\end{equation} 
It has to be the identity map if the B-W property \eqref{eq:biwitwist} holds.

\begin{proposition}\label{prop:sym} 
Let $(U,H)$ be a Poincar\'e covariant net of standard subspaces. Then, for every $W\in\W$, the map $t\mapsto Z_{H(W)}(t)$
defines a one-parameter group and
 $$Z_{H(W)}(t)\in U(G_W)',\qquad \forall t\in\RR.$$
 \end{proposition}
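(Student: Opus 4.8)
The plan is to establish the two assertions — that $Z_{H(W)}$ is a one-parameter group and that it commutes with $U(G_W)$ — separately, relying on the Borchers-type relations of Theorem~\ref{Borch} together with covariance.

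First I would reduce to a reference wedge, say $W=W_3$, since for a general $W$ we pick $g\in\Poi$ with $gW_3=W$ and conjugate: covariance gives $U(g)H(W_3)=H(W)$, hence $U(g)\Delta_{H(W_3)}^{it}U(g)^*=\Delta_{H(W)}^{it}$ by Lemma~\ref{lem:sym}, while $U(g)\lambda_{W_3}(s)U(g)^{-1}=\lambda_W(s)$ and $U(g)G_{W_3}U(g)^{-1}=G_W$; so both claims transfer. For $W_3$ the key input is that the translations $U(a)$ in lightlike directions inside $W_3$ are positive-generator semigroups leaving $H(W_3)$ invariant, so Theorem~\ref{Borch} applies and yields the commutation relations between $\Delta_{H(W_3)}^{it}$ and such translations that match exactly the geometric action of the boost $\Lambda_{W_3}(-2\pi t)$.

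Then, for the one-parameter group property: I would show $Z(t+s)=Z(t)Z(s)$ by first verifying that $\Delta_{H(W)}^{it}$ normalizes $U(\T)$ in the right way. Concretely, for any translation $U(a)$ we have $\Delta_{H(W)}^{it}U(a)\Delta_{H(W)}^{-it}=U(\Lambda_W(-2\pi t)a)$ — this follows from Theorem~\ref{Borch} applied to the lightlike translations generating $\T$ together with the fact that the remaining (spacelike, $W$-preserving) translations commute with $\Delta_{H(W)}^{it}$ because they preserve $H(W)$ by isotony/covariance and hence by Lemma~\ref{inc} are contained in the modular flow's commutant, actually more simply by Borchers applied in both lightcone directions. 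Since $\lambda_W$ is a one-parameter group and $U$ a representation, the map $t\mapsto U(\lambda_W(2\pi t))$ intertwines $\Delta_{H(W)}^{it}$-conjugation with the geometric boost action on $\T$ in exactly the same way; a short computation then shows $Z(t)U(a)Z(t)^{-1}=U(a)$ for all translations $a$, i.e. $Z(t)\in U(\T)'$, and combined with the cocycle identity $Z(t+s)=\Delta^{i(t+s)}U(\lambda_W(2\pi(t+s)))=\Delta^{it}\big(\Delta^{is}U(\lambda_W(2\pi s))\Delta^{-is}\big)\Delta^{it}U(\lambda_W(2\pi t))$ this gives the group law once one knows $\Delta^{is}U(\lambda_W(2\pi s))\Delta^{-is}$ differs from $U(\lambda_W(2\pi s))$ only by something absorbed correctly — which is where commutation with $U(G_W^0)$ enters.

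For the commutation with $U(G_W)$: I would show $Z(t)$ commutes with $U(\T)$ as above, and with $U(G_W^0)$ as follows. For $A\in G_W^0$, $\Lambda(A)W=W$, so by covariance $U(A)H(W)=H(W)$, whence by Lemma~\ref{lem:sym} $U(A)$ commutes with $\Delta_{H(W)}^{it}$. On the other hand $U(A)$ commutes with $U(\lambda_W(2\pi t))$ iff $\Lambda(A)$ commutes with $\Lambda_W$ in $\Poi$ — but this need not hold, so instead I would use that $U(A)\Delta_{H(W)}^{it}U(A)^{-1}=\Delta_{H(W)}^{it}$ and $U(A)U(\lambda_W(2\pi t))U(A)^{-1}=U(A\lambda_W(2\pi t)A^{-1})$, and note that $A\lambda_W(2\pi t)A^{-1}$ is again (the lift of) a boost fixing $W$, hence equals $\lambda_W(2\pi t)$ by uniqueness of the boost subgroup — this is the point where the special role of the \emph{one-parameter boost group} $\Lambda_W$ among $W$-fixing transformations is used. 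It follows that $U(A)Z(t)U(A)^{-1}=Z(t)$. Since $G_W$ is generated by $G_W^0$ and $\T$, we conclude $Z(t)\in U(G_W)'$. The main obstacle I anticipate is handling the twist: strictly one works with $H(W)$ which may be only $B$-twisted-local, and one must check the Borchers relations (Theorem~\ref{Borch}) genuinely apply to $H(W)$ itself — they do, since positivity of the energy and $U(a)H(W)\subset H(W)$ for $a$ in the appropriate lightlike cone are exactly hypotheses~3 and isotony/covariance, independent of twisting — so this is a bookkeeping rather than a conceptual difficulty.
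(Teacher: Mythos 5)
Your proposal is correct and follows essentially the same route as the paper's proof: reduce to $W_3$ by covariance; obtain both the group law and the commutation with $U(G_{W_3}^0)$ from Lemma~\ref{lem:sym} (every element of $G_{W_3}^0$ fixes $H(W_3)$, hence commutes with $\Delta_{H(W_3)}^{it}$, and also commutes with $\lambda_3$); and obtain commutation with $U(\T)$ by applying Theorem~\ref{Borch} to the two lightlike translation semigroups preserving $W_3$ together with the fact that $x_1,x_2$-translations fix $H(W_3)$, these four directions spanning $\RR^4$. One small slip: your displayed ``cocycle identity'' for $Z(t+s)$ is not an identity as written (expanding it leaves a stray factor $\Delta_{H(W)}^{i(t-s)}$), but this is harmless since the group law is immediate from the fact, which you do establish, that $\Delta_{H(W_3)}^{is}$ commutes with $U(\lambda_3(2\pi t))$.
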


\begin{proof}
As $\Poi$ acts transitively on wedges, there is no loss of generality if we fix $W=W_3$ and consider  $G_W=G_3\subset\tPoi$. As $\Lambda_3(t)W_3=W_3$ for any $t\in\RR$ then $U(\lambda_3(t))H(W_3)=H(W_3)$ and, by Lemma \ref{lem:sym}, $\Delta_{H(W_3)}$ commutes with $U(\lambda_3(t))$. In particular, $t\mapsto Z_{H(W_3)}(t)$ defines a unitary one-parameter group.

By positivity of the energy and Theorem \ref{Borch}, $\Delta^{-it}_{H(W_3)}$ has the same commutation relations as boosts $U(\lambda_3(2\pi t))$ w.r.t. translations. Indeed, translations in $x_1$ and $x_2$ directions commute with  $\Delta_{H(W_3)}^{it}$ since they fix $H(W_3)$, and with $U(\lambda_3(t))$. Translations along directions $\mathrm v_+=(1,0,0,1)$ and $\mathrm v_-=(-1,0,0,1)$  have, respectively, positive and negative generators and $U(t)H(W_3)\subset H(W_3)$ for any $t>0$. Then, by Theorem \ref{Borch} $$\Delta_{H(W_3)}^{is}U_\pm(t)\Delta_{H(W_3)}^{-is}= U_\pm(e^{\mp2\pi s}t)\qquad s,t\in\RR,$$ as well as $$U(\lambda_{W_3}(-2\pi t)) U_\pm(t) U(\lambda_{W_3}(2\pi t))= U_\pm(e^{\mp2\pi s}t),\qquad s,t\in\RR$$ where $U_\pm(t)=U(t\cdot\mathrm v_\pm)$. Translations along $x_1,\,x_2,\, \mathrm v_+$ and $\mathrm v_-$ generate $\RR^4$ translations and as a consequence $Z_{H(W_3)}\in U(\T)'$.

Any element  $g\in G_3^0$ fixes the standard subspace $H(W_3)$, hence by Lemma \ref{lem:sym}, $U(g)$ commutes with the modular operator $\Delta_{H(W_3)}$. Furthermore, as $g$ fixes $W_3$, then $U(g)$ also commutes with $U(\lambda_3)$. We  conclude that $Z_{H(W_3)}\in U(G_3)'$.
\end{proof}

Note that $G_3^0=\langle r_3,\lambda_3, r(2\pi)\rangle$, where $r(2\pi)$ is the $2\pi$ rotation.

\begin{proposition}\label{prop:then} Let $(U,H)$ be a Poincar\'e covariant net of standard subspaces.
Let $W\in\W$, and $r_W\in\tPoi$ be such that $\Lambda(r_W)W=W'$. Assume that $Z_{H(W)}$ commutes with $U(r_W)$, then the  B-W and Duality properties hold.
\end{proposition}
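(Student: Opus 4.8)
The plan is to show that the one-parameter group $Z \dot= Z_{H(W)}$ fixes the standard subspace $H(W)$, and then invoke Lemma~\ref{inc} (the Takesaki-type theorem) to conclude that $Z$ acts trivially, forcing the B-W property; Duality then follows from Proposition~\ref{ssd}. By transitivity of $\Poi$ on wedges there is no loss in taking $W = W_3$, writing $r \dot= r_W$, $\Delta \dot= \Delta_{H(W_3)}$ and $K_3$ for the generator $U(\lambda_3(t)) = e^{iK_3 t}$, so that $Z(t) = \Delta^{it}\,e^{2\pi i K_3 t}$.

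First I would observe that, since $\Delta^{it}H(W_3) = H(W_3)$ and $Z(t) = \Delta^{it}U(\lambda_3(2\pi t))$ with $U(\lambda_3(2\pi t))H(W_3) = H(W_3)$, the subspace $H(W_3)$ is invariant under $Z(t)$ for all $t$; this is immediate from Poincar\'e covariance and the fact that $\lambda_3$ preserves $W_3$. By Proposition~\ref{prop:sym}, $Z(t) \in U(G_3)'$, in particular $Z$ commutes with all translations $U(\T)$ and with $U(\lambda_3)$ itself; combined with the hypothesis $[Z(t), U(r)] = 0$, the operators $Z(t)$ commute with every element of the group generated by $G_3$ and $r$. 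Now the key point: the group $\langle G_3, r\rangle$ contains, through $r$, a Lorentz element sending $W_3$ to $W_3'$, and $G_3$ already contains the boosts $\lambda_3$ and rotation $r_3$ fixing $W_3$ together with the full translation group; I would argue (this is the geometric input, cf. Remark~\ref{rmk:orb}) that these generate a subgroup whose image under $\Lambda$ acts on wedges in a way that, together with the action on $H(W_3)$ and $H(W_3')$, pins down $Z$. Concretely, from $\Delta^{it}H(W_3) = H(W_3)$ and the commutation $Z(t)U(r) = U(r)Z(t)$ one gets that $Z(t)$ preserves $U(r)H(W_3) = H(W_3')$ as well.

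The decisive step is then: $Z(t)$ is a unitary commuting with $U(\T)$ that preserves both $H(W_3)$ and $H(W_3')$ (equivalently, by twisted locality, commutes appropriately with the twist $B$ and preserves $BH(W_3')\subset H(W_3)'$), hence preserves $H(W_3)\cap BH(W_3')$ and, more usefully, intertwines the two commuting modular flows. I expect the cleanest route is: $Z(t)$ commutes with $\Delta$ (since $\Delta^{it}$ commutes with $U(\lambda_3)$ and with itself), so $Z$ is itself a one-parameter group commuting with $\Delta^{is}$ and with $U(\lambda_3(s))$, and it commutes with $U(r)$; then $U(r)Z(t)U(r)^{-1} = Z(t)$ while on the other hand the modular conjugation / geometric reflection forces $Z$ to be inverted, i.e. one derives $Z(t) = Z(-t)$, whence $Z(t)^2 = \eins$ for the self-adjoint generator — but $Z$ is also positive-energy-compatible, so its generator must vanish. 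The main obstacle, and where I would spend the most care, is making rigorous the claim that commuting with $U(r)$ (for one chosen $r$ with $\Lambda(r)W_3 = W_3'$) together with the commutation relations already established in Proposition~\ref{prop:sym} is enough to conclude $Z \equiv \eins$: one must track how $r$ conjugates $K_3$ and $\Delta$ (analogous to relations \eqref{eq:bgl1}--\eqref{eq:bgl2}, but now for a wedge-reversing element), show the resulting constraints on $Z$'s generator are incompatible with anything but $0$, and verify that the argument is insensitive to the ambiguity in the choice of $r$ modulo $G_3$. Once $Z \equiv \eins$, \eqref{eq:biwitwist} holds and Duality is Proposition~\ref{ssd}.
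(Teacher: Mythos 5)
There is a genuine gap. Your argument correctly records that $Z(t)$ preserves $H(W_3)$ and, via the hypothesis, also $H(W_3')=U(r)H(W_3)$, and you correctly identify that the endgame should be $Z(t)=Z(-t)$, whence $Z(2t)=Z(t)Z(-t)=\eins$ and $Z\equiv\eins$. But the step you flag as ``where I would spend the most care'' is precisely the step you never supply, and the shortcut you propose does not work: conjugating by $U(r)$ gives, by Lemma~\ref{lem:sym} and $\lambda_{W'}(t)=\lambda_W(-t)$,
$$U(r)Z_{H(W)}(t)U(r)^{*}=\Delta_{H(W')}^{it}\,U(\lambda_W(-2\pi t))=Z_{H(W')}(t),$$
so the hypothesis yields $Z_{H(W)}(t)=Z_{H(W')}(t)$, \emph{not} $Z_{H(W)}(t)=Z_{H(W)}(-t)$. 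Identifying $Z_{H(W')}(t)$ with $Z_{H(W)}(-t)$ requires $\Delta_{H(W')}^{it}=\Delta_{H(W)}^{-it}$, which is (twisted) wedge duality — a conclusion of the proposition, not an available input. Likewise your opening plan to ``invoke Lemma~\ref{inc} to conclude that $Z$ acts trivially'' misreads that lemma: a unitary one-parameter group fixing a standard subspace is by no means trivial ($\Delta_{H(W)}^{it}$ itself is a counterexample); Lemma~\ref{inc} forces equality of \emph{subspaces}, not triviality of operators.

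The missing idea is a doubling computation that extracts duality first. From $Z_{H(W)}(t)=Z_{H(W')}(t)$ and the one-parameter group property one gets
$$Z_{H(W)}(2t)=Z_{H(W)}(t)\,Z_{H(W')}(t)=\Delta_{H(W)}^{it}\,U(\lambda_W(2\pi t))\,U(\lambda_W(-2\pi t))\,\Delta_{H(W')}^{it}=\Delta_{H(W)}^{it}\Delta_{H(W')}^{it},$$
the boost factors cancelling. Since $Z_{H(W)}(2t)$ and $\Delta_{H(W)}^{it}$ both preserve $H(W)$, this forces $\Delta_{H(W')}^{it}H(W)=H(W)$. Now twisted locality gives the inclusion $H(W)\subset (B\,H(W'))'$ of cyclic subspaces, and $\Delta^{it}_{(BH(W'))'}=\Delta^{-it}_{H(W')}$ (Lemma~\ref{lem:sym} and Proposition~\ref{prop:B}) preserves $H(W)$; Lemma~\ref{inc} then yields $H(W)=(B\,H(W'))'$, i.e.\ duality, and in particular $\Delta_{H(W')}=\Delta_{H(W)}^{-1}$. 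Only at this point does your final step become valid: $Z_{H(W)}(t)=U(r)Z_{H(W)}(t)U(r)^{*}=\Delta_{H(W)}^{-it}U(\lambda_W(-2\pi t))=Z_{H(W)}(-t)$, hence $Z\equiv\eins$ and the B-W property holds. (No appeal to positivity of the energy is needed for this last implication; $Z(2t)=\eins$ for all $t$ already kills the generator.) Your speculation about tracking how $r$ conjugates $K_3$ at the Lie-algebra level, or about the modular conjugation inverting $Z$, is not needed and would not close the gap on its own.
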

\begin{proof} The map $$\R\ni t\mapsto Z_{H(W)}(t) $$ is a unitary,  one-parameter,  s.o.-continuous group by Proposition \ref{prop:sym}.
Now by hypothesis and covariance
$$Z_{H(W')}(t)=U(r_W)Z_{H(W)}(t)U(r_W)^*=Z_{H(W)}(t)$$  where
$$Z_{H(W')}(t)=U(\lambda_W(-2\pi t))\Delta_{H(W')}^{it}.$$
We find that
$$Z_{H(W)}(2t)=Z_{H(W)}(t)Z_{H(W)}(t)=Z_{H(W)}(t)Z_{H(W')}(t)=\Delta_{H(W)}^{it}\Delta_{H(W')}^{it}$$
and since $Z_{H(W)}(2t)$ is an automorphism of $H(W)$ and 
\begin{align*}\Delta_{H(W)}^{-it}Z_{H(W)}(2t)H(W)&=\Delta_{H(W')}^{it}H(W)\Leftrightarrow\\
 \Delta_{H(W)}^{-it}H(W)&=\Delta_{H(W')}^{it}H(W) \Leftrightarrow\\
H(W)&=\Delta_{H(W')}^{it}H(W)\qquad\forall t\in\RR.\end{align*}
By locality $H(W)\subset (B\,H(W'))'$, Lemma \ref{lem:sym} and Proposition \ref{prop:B}, we have$$\Delta^{it}_{(BH(W'))'}H(W)=\Delta^{-it}_{BH(W')}H(W)=\Delta^{-it}_{H(W')}H(W)=H(W)$$ and by Lemma \ref{inc} we conclude wedge duality, $$H(W)= (B\,H(W'))'.$$ Furthermore, by the last condition for any $W\in\W$ then 
$$\Delta_{H(W)}=\Delta_{H(W')}^{-1}$$
and the B-W property follows since 
\begin{align*}Z_{H({W})}(t)&=U(r_W)Z_{H(W)}(t)U(r_W)^*\\&=U(\lambda_W(-2\pi t))\Delta_{H(W)}^{-it}=Z_{H(W)}(-t)
\end{align*} hence $Z_{H(W)}(t)=\textbf{1}$. 
\end{proof}

Now we state the properties we are interested in.
\begin{definition}\label{def:mod}
We shall say that a unitary, positive energy representation is \textbf{modular} if for any $U$-covariant net of standard subspaces $H$, namely any couple $(U,H)$, then the B-W property holds.

Let $W\in\W$. A unitary, positive energy $\tPoi$-representation  $U$ satisfies the {\bf modularity condition} if for an element $r_W\in\tPoi$ such that $\Lambda(r_W)W=W'$  we have that
\begin{equation}\tag{MC}\label{eq:cond}{U(r_W)\in U(G_W)''}.\end{equation}
\end{definition}
Note that \eqref{eq:cond} does neither depend on the choice of $r_W$, nor of $W$. Indeed if $\tilde r_W\in\tPoi$ is another element such that $\Lambda(\tilde r_W) W=W'$ then $ r_W\tilde r_W\in G_W$ and if \eqref{eq:cond} holds for $U(r_W)$, then it holds for $U(\tilde r_W)$. We conclude \eqref{eq:cond} for any other wedge by the transitivity of the $\Poi$-action on wedge regions.

Now, we prove that the representations satisfying $\eqref{eq:cond}$ are modular.
\begin{theorem}\label{thm:starcor}
Let $U$ be a positive energy unitary representation of the Poincar\'e group $\tPoi$.  If the condition \eqref{eq:cond} holds on $U$, then  any local $U$-covariant net of standard subspaces,  namely any pair $(U,H)$, satisfies the B-W  and the Duality properties.
\end{theorem}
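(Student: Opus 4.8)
The plan is to reduce Theorem \ref{thm:starcor} directly to Proposition \ref{prop:then}. Fix a wedge $W\in\W$ and pick $r_W\in\tPoi$ with $\Lambda(r_W)W=W'$; by the remark following Definition \ref{def:mod} it does not matter which such $r_W$ we take, nor which wedge we use. By Proposition \ref{prop:then} it suffices to show that the one-parameter group $t\mapsto Z_{H(W)}(t)=\Delta_{H(W)}^{it}U(\lambda_W(2\pi t))$ commutes with $U(r_W)$. The modularity condition \eqref{eq:cond} says exactly that $U(r_W)\in U(G_W)''$, so the whole proof comes down to the single observation that $Z_{H(W)}(t)$ lies in the commutant of $U(G_W)''$.

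The key step, then, is bicommutant bookkeeping. Proposition \ref{prop:sym} already gives $Z_{H(W)}(t)\in U(G_W)'$ for all $t\in\RR$. Since $U(r_W)\in U(G_W)''$ and the bicommutant is contained in the commutant of the commutant, i.e. $U(G_W)''\subset \bigl(U(G_W)'\bigr)'$, every operator in $U(G_W)''$ commutes with every operator in $U(G_W)'$. In particular $U(r_W)$ commutes with $Z_{H(W)}(t)$ for every $t$. This is the hypothesis of Proposition \ref{prop:then}, which then yields both the Bisognano–Wichmann property \eqref{eq:biwitwist} and the Duality property $H(W)'=BH(W')$ for this $W$; by Poincar\'e covariance (property 2 of the net) and the transitivity of the $\Poi$-action on $\W$, conjugating by $U(g)$ with $gW=\tilde W$ transports both properties to an arbitrary wedge $\tilde W\in\W$.

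I would also briefly justify the two reduction remarks so the argument is self-contained. If $\tilde r_W$ is another lift with $\Lambda(\tilde r_W)W=W'$, then $\Lambda(r_W\tilde r_W^{-1})W=W$, so $r_W\tilde r_W^{-1}\in G_W^0\subset G_W$ up to the deck kernel, hence $U(\tilde r_W)\in U(G_W)''$ iff $U(r_W)\in U(G_W)''$; and for a different wedge $\tilde W=gW$ with $g\in\Poi$ one has $G_{\tilde W}=gG_Wg^{-1}$, so $U(g)U(G_W)''U(g)^*=U(G_{\tilde W})''$, showing \eqref{eq:cond} is wedge-independent. Thus verifying \eqref{eq:cond} for a single convenient wedge (say $W_3$, where $G_3^0=\langle r_3,\lambda_3,r(2\pi)\rangle$ and $G_3=\langle G_3^0,\T\rangle$) suffices.

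Honestly, there is no serious obstacle here: the real content has already been packaged into Propositions \ref{prop:sym} and \ref{prop:then}, and Theorem \ref{thm:starcor} is the formal extraction of the statement ``$Z_{H(W)}\in U(G_W)'$ plus $U(r_W)\in U(G_W)''$ implies $[Z_{H(W)}(t),U(r_W)]=0$.'' The only point requiring a moment's care is the direction of the von Neumann bicommutant inclusion — making sure one invokes $\mathcal M''\subset \mathcal M'{}'$ in the correct sense, equivalently that $U(G_W)''$ and $U(G_W)'$ commute elementwise — but this is standard and immediate. All the analytic work (Borchers' theorem, the one-parameter group property, the Takesaki-type argument in Proposition \ref{prop:then}) has been done upstream.
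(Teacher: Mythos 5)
Your proof is correct and is essentially identical to the paper's: Proposition \ref{prop:sym} gives $Z_{H(W)}(t)\in U(G_W)'$, the modularity condition \eqref{eq:cond} places $U(r_W)$ in $U(G_W)''=\bigl(U(G_W)'\bigr)'$, so the two commute and Proposition \ref{prop:then} yields the B-W and Duality properties. Your added justification that \eqref{eq:cond} is independent of the choice of $r_W$ and of the wedge reproduces the remark following Definition \ref{def:mod}, so nothing further is needed.
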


\begin{proof}
Let $(U,H)$ be a $U$-covariant net of standard subspaces, then $Z_{H(W)}\in U(G_W)'$ by Proposition \ref{prop:sym}. Then by assumptions $Z_{H(W)}$ commutes with $U(r_W)$ where $\Lambda(r_W)W=W'$, then we conclude the thesis by Proposition \ref{prop:then}
\end{proof}

Let $U$ be a representation of $\tPoi$ acting on a standard subspace net $\cW\ni W\mapsto H(W)\subset \H$. Assume that $J_{geo,W}$ is an anti-unitary operator extending $U$ to a representation of $\tilde{\cP_+}$ through $W$-reflection and  assume that modular covariance holds. As in \cite{GL}, the algebraic $J_{H(W)}$ implements the wedge $W$ reflection and, up to a $\tPoi$ element,  the PCT operator (the proof in \cite{GL} can be straightforwardly adapted in the standard subspace net case).  In this setting, let $K_W$ be the $W$-boost generator on $\H$, the following operators \begin{equation} \label{eq:algpt}S_{geo,W}\dot=J_{geo,W}e^{-\pi K_W},\qquad S_{alg,W}\dot=J_{H(W)}\Delta_{H(W)}^{1/2},\end{equation}  can be called {\bf geometric} and {\bf algebraic Tomita operators}.
\begin{corollary}
With the assumptions of Proposition \ref{prop:then}, 
$$S_{geo,W}=CS_{alg,W},\qquad\forall W\in\cW$$ where $C\in U(\tPoi)'$\end{corollary}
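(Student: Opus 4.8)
The plan is to unwind the definitions of $S_{geo,W}$ and $S_{alg,W}$ in \eqref{eq:algpt} and to compare the two factors of each polar decomposition separately, using the hypotheses of Proposition \ref{prop:then} (which give modular covariance, the B-W property and wedge duality). First I would fix a wedge $W\in\cW$ and write $S_{alg,W}=J_{H(W)}\Delta_{H(W)}^{1/2}$. By the Bisognano-Wichmann property established under these assumptions, $\Delta_{H(W)}^{-it}=U(\lambda_W(2\pi t))=e^{iK_Wt}$ with the convention $U(\lambda_W(2\pi t))=e^{iK_W t}$, so the positive parts agree up to the standard analytic continuation: $e^{-\pi K_W}=\Delta_{H(W)}^{1/2}$. (One should be a little careful with signs and the $2\pi$ normalization, matching \eqref{eq:biwitwist}; this is the only genuinely delicate bookkeeping point, and it is purely a matter of conventions.) Hence the two Tomita operators differ only in their anti-unitary parts, and
\begin{equation}\label{eq:C-def}
S_{geo,W}S_{alg,W}^{-1}=J_{geo,W}J_{H(W)}=:C_W,
\end{equation}
which is a product of two anti-unitary involutions, hence a unitary. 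It remains to identify $C_W$ with an element of $U(\tPoi)'$ independent of $W$.

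The second step is to show $C_W$ commutes with $U(g)$ for all $g\in\tPoi$. Both $J_{geo,W}$ and $J_{H(W)}$ implement the $W$-reflection geometrically: $J_{geo,W}$ does so by hypothesis, and $J_{H(W)}$ does so by the argument of \cite{GL} adapted to the standard-subspace setting (as invoked just before the corollary), because modular covariance holds. Therefore for any $g\in\tPoi$ both $\mathrm{Ad}(J_{geo,W})$ and $\mathrm{Ad}(J_{H(W)})$ send $U(g)$ to $U(j_W g j_W)$ (up to the same, possibly nontrivial, central/PCT ambiguity which cancels in the product). Consequently $\mathrm{Ad}(C_W)=\mathrm{Ad}(J_{geo,W})\circ\mathrm{Ad}(J_{H(W)})$ acts trivially on $U(\tPoi)$, i.e. $C_W\in U(\tPoi)'$. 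Finally, covariance in $W$: for $g\in\tPoi$ with $gW=W'$ (say, or more simply $gW=\tilde W$ for an arbitrary target wedge) one has $U(g)J_{geo,W}U(g)^*=J_{geo,\tilde W}$ and $U(g)J_{H(W)}U(g)^*=J_{H(\tilde W)}$ by Lemma \ref{lem:sym} and the covariance of $J_{geo}$, so conjugating \eqref{eq:C-def} by $U(g)$ gives $C_{\tilde W}=U(g)C_WU(g)^*=C_W$ since $C_W\in U(\tPoi)'$. Thus $C:=C_W$ is independent of $W$, lies in $U(\tPoi)'$, and satisfies $S_{geo,W}=CS_{alg,W}$ for all $W\in\cW$.

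I expect the main obstacle to be entirely notational rather than conceptual: making precise the claim that $J_{H(W)}$ implements the $W$-reflection ``up to a $\tPoi$ element, the PCT operator'' and verifying that whatever ambiguous factor appears is the \emph{same} for $J_{geo,W}$ and $J_{H(W)}$, so that it cancels in $C_W=J_{geo,W}J_{H(W)}$ and one really lands in $U(\tPoi)'$ and not merely in $U(\tcP_+)'$ or in the commutant twisted by a PCT automorphism. This is where I would be most careful, citing the standard-subspace analogue of the \cite{GL} argument for the geometric meaning of $J_{H(W)}$; once that is granted, the commutation and $W$-independence are immediate from Lemma \ref{lem:sym} and the positive-part identification is just the B-W property from Proposition \ref{prop:then}.
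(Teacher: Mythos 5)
Your proof is correct and follows essentially the same route as the paper: identify the positive parts $e^{-\pi K_W}=\Delta_{H(W)}^{1/2}$ via the B-W property, and observe that $C=J_{geo,W}J_{H(W)}$ lies in $U(\tPoi)'$ because both conjugations implement the same $W$-reflection (the paper cites duality and the adapted \cite{GL} argument for this). Your additional check that $C$ is independent of $W$ is a worthwhile explicit step that the paper leaves implicit, but it does not change the argument.
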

\begin{proof}  By duality, $J_{geo,W}$ and $J_{H(W)}$ both  implement anti-unitarily $U(j_W)$, then $J_{geo}J_{H(W)}\in U(\cP_+^\uparrow)'$. By the B-W property, $e^{-\pi K_W}=\Delta^{1/2}_W$, and we conclude.\end{proof}

The B-W property  and the $\tilde \P_+$ covariance do not imply that there is a unique net undergoing the $U$-action. The conjugation operator can differ from the geometric conjugation by a unitary in  $U(\cP_+^\uparrow)'$. For instance, given an irreducible, (anti-)unitary $\mathcal P_+$-representation $U$ where $U(j_W)$ implements the $W$-reflection, we have  two $U$-covariant nets, according to the couples $\{(U(j_W),e^{-2\pi K_W})\}_{W\in\W}$ and $\{(-U(j_W),e^{-2\pi K_W})\}_{W\in\W}$ defining the wedge subspaces. If we just require $\tPoi$-covariance through a representation $U$, then for any modulus one complex number $\lambda$, the couples $\{(\lambda U(j_W),e^{-2\pi K_W})\}_{W\in\W}$ define $U$-covariant standard subspace nets.

\subsection*{Direct sums}

The modularity property easily extends to direct integrals and multiples of representations as the following proposition shows.
\begin{proposition}\label{prop:P} Let $U$  and $\{U_x\}_{x\in X}$ be unitary positive energy representations of $\tPoi$  satisfying \eqref{eq:cond}.

Let $\K$ be an Hilbert space, then \eqref{eq:cond} holds for $U\otimes 1_\K\in\B(\H\otimes \K)$.

Let $(X,\mu)$ be a standard measure space.
Assume that  $U_x|_{G_W}$ and $U_y|_{G_W}$ are disjoint for $\mu$-a.e. $ x\neq y$ in $X$. Then $U=\int_X U_x d\mu(x)$ satisfies \eqref{eq:cond}. \end{proposition}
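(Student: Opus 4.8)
The plan is to handle the two assertions separately, since the tensor-product (multiplicity) statement is essentially trivial and the direct-integral statement carries all the content. For the first part, note that $U\otimes 1_\K$ restricted to any subgroup $G\subset\tPoi$ equals $(U|_G)\otimes 1_\K$, so the von Neumann algebra $(U\otimes 1_\K)(G_W)''$ contains $U(G_W)''\otimes 1_\K$; applying \eqref{eq:cond} for $U$ to the element $r_W$ gives $U(r_W)\in U(G_W)''$, hence $U(r_W)\otimes 1_\K\in U(G_W)''\otimes 1_\K\subseteq (U\otimes 1_\K)(G_W)''$, which is exactly \eqref{eq:cond} for $U\otimes 1_\K$ (one should be slightly careful and observe $(U\otimes 1_\K)(G_W)'' = U(G_W)''\otimes 1_\K$ by the commutation theorem, though only the inclusion $\supseteq$ is actually needed).

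For the direct-integral statement I would argue as follows. Fix $W\in\W$ and $r_W\in\tPoi$ with $\Lambda(r_W)W=W'$. Write $U=\int_X^{\oplus} U_x\,d\mu(x)$ on $\H=\int_X^\oplus\H_x\,d\mu(x)$. The key structural fact is that since $U_x|_{G_W}$ and $U_y|_{G_W}$ are disjoint for $\mu$-a.e.\ $x\neq y$, the von Neumann algebra $U(G_W)''$ is the algebra of \emph{all} decomposable operators that are diagonal in the representation parameter — more precisely, standard direct-integral theory (disjointness almost everywhere $\Rightarrow$ the decomposition is multiplicity-free along $X$) shows that the commutant $U(G_W)'$ is exactly the algebra $L^\infty(X,\mu)$ of diagonalizable operators, so $U(G_W)'' = \{T=\int_X^\oplus T_x\,d\mu(x) : T_x\in \B(\H_x) \text{ (measurable, essentially bounded)}\}$, the full algebra of decomposable operators. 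Now $U(r_W)=\int_X^\oplus U_x(r_W)\,d\mu(x)$ is manifestly a decomposable operator, hence lies in $U(G_W)''$, which is \eqref{eq:cond} for $U$.

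The main obstacle — and the step deserving the most care — is justifying that almost-everywhere disjointness of $\{U_x|_{G_W}\}$ forces $U(G_W)'$ to consist only of diagonalizable operators. The clean way is: a decomposable operator $T=\int^\oplus T_x\,d\mu$ lies in $U(G_W)'$ iff $T_x\in U_x(G_W)'$ for $\mu$-a.e.\ $x$; the off-diagonal part of a general element of $U(G_W)'$ is an intertwiner between $\int_A^\oplus U_x|_{G_W}$ and $\int_B^\oplus U_x|_{G_W}$ for disjoint measurable $A,B$, and disjointness a.e.\ makes every such intertwiner vanish, so $U(G_W)'$ is contained in the decomposable operators and commutes with $L^\infty(X)$, forcing $U(G_W)'\subseteq L^\infty(X)\cdot 1$; the reverse inclusion $L^\infty(X)\subseteq U(G_W)'$ is automatic. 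Taking commutants gives $U(G_W)''\supseteq L^\infty(X)' = \{$decomposable operators$\}\ni U(r_W)$. One invokes here the standard measurable-choice/disintegration results (e.g.\ Dixmier, \emph{Les alg\`ebres d'op\'erateurs dans l'espace hilbertien}), and should note that $G_W$ being a separable locally compact (indeed Lie) group ensures all the measurability hypotheses are met, so no pathology arises.
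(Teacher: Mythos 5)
Your first assertion (tensoring with $1_\K$) is handled correctly and in essentially the same way as the paper: $(U\otimes 1_\K)(G_W)'=U(G_W)'\otimes\B(\K)$, so $U(r_W)\otimes 1_\K$ commutes with this commutant and hence lies in $(U\otimes 1_\K)(G_W)''$.

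The direct-integral part contains a genuine error. You claim that a.e.\ disjointness of the restrictions $U_x|_{G_W}$ forces $U(G_W)'=L^\infty(X,\mu)\cdot 1$, whence $U(G_W)''$ would be the algebra of \emph{all} decomposable operators. That is too strong: disjointness only kills the off-diagonal intertwiners, so what it actually yields is that every element of $U(G_W)'$ is decomposable with fibres in the fibre commutants, i.e.\ $U(G_W)'=\int_X^{\oplus}U_x(G_W)'\,d\mu(x)$, and dually $U(G_W)''=\int_X^{\oplus}U_x(G_W)''\,d\mu(x)$. Your stronger claim would require each $U_x|_{G_W}$ to be irreducible, which is neither assumed nor true in the intended application: Theorem \ref{starint} feeds in \emph{multiples} of scalar representations, whose $G_W$-restrictions have large commutants. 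The step ``contained in the decomposable operators and commutes with $L^\infty(X)$, forcing $U(G_W)'\subseteq L^\infty(X)\cdot 1$'' conflates \emph{decomposable} with \emph{diagonalizable}; every decomposable operator commutes with $L^\infty(X)$, so nothing is forced. A telltale symptom is that your argument never invokes the hypothesis that each $U_x$ satisfies \eqref{eq:cond}: were the argument sound, taking $X$ to be a single point would prove \eqref{eq:cond} for an arbitrary representation, contradicting the counterexamples of Section \ref{sect:infcoun}. The repair is exactly the paper's route: from $U(G_W)''=\int_X^{\oplus}U_x(G_W)''\,d\mu(x)$ together with the fibrewise hypothesis $U_x(r_W)\in U_x(G_W)''$ for $\mu$-a.e.\ $x$, the decomposable operator $U(r_W)=\int_X^{\oplus}U_x(r_W)\,d\mu(x)$ lies in $U(G_W)''$.
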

\begin{proof} We can assume $W=W_3$.
Let  $\cU\dot=U\otimes 1_K$, since $\cU(G_3)'=U(G_3)'\otimes B(\K)$ it follows that $\cU(r_1(\pi))=U(r_1(\pi))\otimes1$ commutes with $\cU(G_3)'$, hence $\cU(r_1(\pi))\in \cU(G_3)''$.

For the second statement, let
$$U(a,\Lambda)=\int_{X}^\bigoplus U_x(a,\Lambda)d\mu(x)\quad\text{acting on }\quad\H=\int_{X}^\bigoplus\H_xd\mu(x).$$
Then, by disjointness,  $$U(G_3)''=\int_{X}^\oplus U_x(G_3)''d\mu(x)$$ and $U(r_1)=\int_{X}^\oplus U(r_1(\pi))d\mu(x)$ we have that $U(r_1(\pi))\in U(G_3)''$\end{proof}

\section{The scalar case}

In this section we are going to show that \eqref{eq:cond} holds for scalar representations, namely  massive or massless $0$-spin/helicity representations. 
The scalar representations have the following form 
$$(U_{m,0}(a,A)\phi)(p)=e^{iap}\phi(\Lambda(A)^{-1}p),\qquad(a,A)\in\RR^{1+3}\rtimes\tilde \cL_+^\uparrow=\tPoi,$$
where $$\phi\in\H_{m,0}\dot= L^2(\Omega_m,\delta(p^2-m^2)\theta(p_0)d^4p),$$ and $\Omega_m=\{p=(p_0,{\bf p})\in\RR^{1+3}:\,p^2=p_0^2-{\bf p}^2=m^2\}$ with $m\geq0$. 
We recall that $U_{m,0}$ factors through $\Poi$.

Any momentum $p\in\Omega_m$ is a point in the dual group of $\T$ i.e. a character. We recall that $\Poi$ acts on $\Omega_m$-characters as dual action of the adjoint action of $\Poi$ on $\T$.  Clearly, $\T$  acts trivially and $\mathcal L_+^\uparrow$ acts geometrically on $\Omega_m$, i.e. $(a,\Lambda)\cdot p=\Lambda^{-1}p$ with $(a,\Lambda)\in\RR^{1+3}\rtimes \cL_+^\uparrow=\Poi$.

We start  with the following remark.
\begin{remark} \label{rmk:orb} Fix $p=(p_0,p_1,p_2,p_3)\in\Omega_m$, $m> 0$,  $$R_1(\pi)p=(p_0,p_1,-p_2,-p_3)$$ can be obtained as a composition of a $\Lambda_3$-boost of parameter $t_p$ and a $R_3$-rotation of parameter $\theta_p$ as
\begin{equation}\label{'''}
\begin{aligned}\Lambda_3(t_p)R_3(\theta_p)(p_0,p_1,p_2,p_3)&=\Lambda_3(t_p)(p_0,p_1,-p_2,p_3)\\&=(p_0,p_1,-p_2,-p_3).\end{aligned}\end{equation} Clearly $t_p$ and $\theta_p$ depend on $p$. By \eqref{'''}, we deduce that  $G_3^0$ orbits on $\Omega_m$ are not changed by $R_1(\pi)$. With $m=0$ an analogue argument holds for all the orbits except $\{(p_0,0,0,p_0), p_0\geq0\}$ and $\{(p_0,0,0,-p_0), p_0\geq0\}$, i.e. there is no $g\in G_3$, such that $g(p_0,0,0,-p_0)=(p_0,0,0,p_0)$. On the other hand these orbits have null measure with the restriction of the Lebesgue measure to $\partial V_+$. This remark holds in $\RR^{1+s}$ with $s>2$. 
\end{remark}

\begin{lemma}\label{lem:orb}
Let $f\in L^\infty(\Omega_m)$ such that for every $g\in G_3^0$, $f(p)=f(gp)$ for a.e. $p\in\Omega_m$. Then $f(p)=h(p)$ for a.e. $p\in\Omega_m$ where $h\in L^\infty(\Omega_m)$ is  constant  on any $\overline{\{g \,p\}}_{g\in G_3}$ orbit. 
\end{lemma}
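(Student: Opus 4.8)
The plan is to analyze the orbit structure of the group $G_3$ acting on $\Omega_m$ and to show that a function invariant under the smaller group $G_3^0$ can be "averaged" along the extra translation directions contained in $G_3$ to produce a genuinely $G_3$-invariant function agreeing with $f$ a.e. The key structural fact is that $G_3$ is generated by $G_3^0$ together with all of $\T$, while on the mass shell $\T$ acts trivially; so at first glance $G_3$ and $G_3^0$ have the same orbits. The only subtlety — and this is the point Remark \ref{rmk:orb} is really about — is whether the closure of the $G_3$-orbits differs from the closure of the $G_3^0$-orbits, i.e. whether the map $g \mapsto g\,p$ has non-closed image and the closures pick up extra identifications (for instance, a boost along $x_3$ of unbounded parameter can send a point towards the light-cone generators when $m=0$). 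Thus the first step is to identify, for $m>0$, the $G_3^0$-orbits explicitly: fixing the invariants $p_0^2 - p_1^2$ (preserved by the $\Lambda_3$-boost and $R_3$-rotation) and $p_2^2+p_3^2$, one checks that $\overline{\{g\,p\}}_{g\in G_3^0}$ is already closed, so that $\overline{\{g\,p\}}_{g\in G_3}$ coincides with the orbit closure under $G_3^0$ up to a null set. For $m=0$ one excises the two null orbits $\{(p_0,0,0,\pm p_0)\}$ as in Remark \ref{rmk:orb}.

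Next I would construct $h$ as follows. The orbit space $\Omega_m / \overline{G_3}$ is a standard Borel space; choose a Borel cross-section, or more concretely parametrize orbits by a complete set of Borel invariants (the two quantities above for $m>0$). Define $h$ to be the essential value of $f$ on each orbit: since $f$ is $G_3^0$-invariant a.e. and the group $G_3^0$ is a (second countable, locally compact) group acting measurably, a Fubini/disintegration argument over the orbit space shows that $f$ is a.e. constant on a.e. orbit, so the assignment orbit $\mapsto$ essential value of $f$ is well-defined and Borel, and pulls back to an $L^\infty$ function $h$ on $\Omega_m$ that is honestly constant on each $\overline{G_3}$-orbit. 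Then $f = h$ a.e.\ by the very definition of $h$ as the essential value.

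The main obstacle is the measure-theoretic bookkeeping in the disintegration step: one must argue that "$f(gp)=f(p)$ for a.e.\ $p$, for each fixed $g$" upgrades to "for a.e.\ $p$, $f$ is constant on the orbit of $p$." This is a standard fact (it follows by applying Fubini on $G_3^0 \times \Omega_m$ with a left Haar measure on $G_3^0$, using that the action is measurable and the orbit-space decomposition is Borel), but it requires the measure class on $\Omega_m$ to behave well under the $G_3^0$-action — which holds because $\mathcal{L}_+^\uparrow$ acts on $\Omega_m$ by diffeomorphisms preserving the measure class (it multiplies the Lorentz-invariant measure by a smooth positive Jacobian). The passage from invariance under a dense subgroup or a set of generators to invariance under all of $G_3^0$ is harmless here since $G_3^0$ is generated by the one-parameter groups $\lambda_3$, $r_3$ (and $r(2\pi)$, which acts trivially on $\Omega_m$), and invariance under each one-parameter flow for a.e.\ point gives joint invariance by a further Fubini argument. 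Finally, one records that the same argument runs verbatim in $\mathbb{R}^{1+s}$ for $s>2$, with $R_3$ replaced by the appropriate rotation subgroup fixing $W_3$.
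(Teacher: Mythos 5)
Your argument is correct and follows essentially the same route as the paper's proof: parametrize $\Omega_m$ (off the exceptional null set) by the orbit label $r=\sqrt{p_1^2+p_2^2}$ together with coordinates $(\theta,t)$ in which $G_3^0$ acts by translations, observe that $\T$ acts trivially on the mass shell so $G_3$- and $G_3^0$-orbits coincide, and use a Fubini/ergodicity argument on each orbit to replace $f$ a.e.\ by a function of the orbit label alone. One small slip worth fixing: with the paper's conventions ($\Lambda_3$ boosts the $x_0$--$x_3$ plane and $R_3$ rotates the $x_1$--$x_2$ plane) the $G_3^0$-invariants are $p_0^2-p_3^2$ and $p_1^2+p_2^2$, not $p_0^2-p_1^2$ and $p_2^2+p_3^2$; this is only an index permutation and does not affect the structure of your argument.
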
 
\begin{proof}
Any point $p=(p_0,p_1,p_2,p_3)\in\Omega_m$ (except the massless null measure sets $\{(p_0,0,0,p_0), p_0\geq0\}$ and $\{(p_0,0,0,-p_0), p_0\geq0\}$), can be identified with a radius $r=p_1^2+p_2^2$, an angle $\theta\in[0,2\pi]$ such that $(p_1,p_2)=r(\cos\theta,\sin\theta)$  and a parameter $t\in\RR$ such that if $(p_0,p_3)=\sqrt{r^2+m^2}(\cosh t ,\sinh t)$.
In particular $G_3^0$ orbits  $\sigma_r$  are labelled by $r\in\RR^+$. On each orbit (with its invariant $G_3^0$ measure) the only positive measure set which is preserved by the $G_3^0$ action is the full orbit (up to a set of measure zero). This fact ensures that $G_3^0$ invariant functions have to be almost everywhere constant.
Up to unitary equivalence, we can decompose the Hilbert space as $\int_{\RR^+}L^2(\sigma_r,d\mu_r)dr$ where $d\mu_r$ is the $G_3$ invariant measure on $\sigma_r$. 
$G_3$ representations on different orbits are inequivalent, then $U(G_3)'=\int_{\RR^+}(f(r)\cdot1)dr$ and we conclude.
\end{proof}
\begin{proposition}\label{prop:scalar} Let $U$ be a unitary, positive energy, irreducible scalar representation of the Poicar\'e group. Then $U$ satisfies the modularity condition \eqref{eq:cond}.
\end{proposition}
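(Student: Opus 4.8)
The plan is to verify the modularity condition \eqref{eq:cond} for $U = U_{m,0}$ directly, fixing once and for all the reference wedge $W = W_3$ and the element $r_W = r_1(\pi)$, which satisfies $\Lambda(r_1(\pi))W_3 = W_3'$. Since the condition is equivalent to $U(r_1(\pi)) \in U(G_3)''$, and $U(G_3)''$ is a von Neumann algebra, it suffices to show that $U(r_1(\pi))$ commutes with every operator in the commutant $U(G_3)'$. So the first step is to identify $U(G_3)'$ concretely. Because $U_{m,0}$ acts on $L^2(\Omega_m, d\mu)$ by the geometric action of the Lorentz group (translations acting by multiplication by the character $e^{iap}$), an operator commuting with $U(\T)$ must be a multiplication operator by some $f \in L^\infty(\Omega_m)$ (the translations act ergodically enough on the fibers over momentum space that their commutant is exactly $L^\infty(\Omega_m)$ acting by multiplication); and commuting additionally with $U(G_3^0)$ forces $f$ to be $G_3^0$-invariant, i.e. $f(p) = f(gp)$ for a.e.\ $p$ and every $g \in G_3^0$.

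The second step is to invoke Lemma~\ref{lem:orb}: any such $f$ agrees a.e.\ with a function $h$ that is constant on the closures of the larger orbits $\overline{\{g\,p\}}_{g \in G_3}$. The third step is the geometric input from Remark~\ref{rmk:orb}: the transformation $R_1(\pi)$ maps each $G_3^0$-orbit $\sigma_r$ into itself (for $m>0$ this is the computation \eqref{'''}; for $m=0$ the two exceptional light-ray orbits $\{(p_0,0,0,\pm p_0)\}$ are $\mu$-null, hence irrelevant). Consequently $R_1(\pi)$ also preserves each of the coarser $G_3$-orbits, so $h \circ R_1(\pi)^{-1} = h$ a.e.\ as elements of $L^\infty(\Omega_m)$; equivalently the multiplication operator $M_f = M_h$ commutes with $U(r_1(\pi))$ (which acts by $\phi \mapsto \phi \circ \Lambda(r_1(\pi))^{-1}$). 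Since this holds for every $M_f \in U(G_3)'$, we get $U(r_1(\pi)) \in U(G_3)'' $, i.e.\ \eqref{eq:cond} holds for $W_3$, and by the wedge-independence remark following Definition~\ref{def:mod}, for every wedge.

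I expect the main obstacle — already absorbed into Lemma~\ref{lem:orb} in the excerpt — to be the careful identification of $U(G_3)'$ via the orbit decomposition: one must check that the $G_3^0$-action on each fixed orbit $\sigma_r$ is such that the only $G_3^0$-invariant measurable sets of positive measure are the full orbit modulo null sets (ergodicity of the boost-plus-rotation subgroup on $\sigma_r$ with its invariant measure), and that the representations of $G_3$ on distinct orbits $\sigma_r$, $\sigma_{r'}$ are mutually disjoint so that the commutant decomposes as a direct integral of scalars $\int_{\RR^+} (f(r)\cdot \mathbf 1)\, dr$. Granting Lemma~\ref{lem:orb}, the remaining steps are purely the bookkeeping above: translating "constant on $G_3$-orbits" plus "$R_1(\pi)$ preserves $G_3$-orbits" into the commutation $[U(r_1(\pi)), M_f] = 0$, and then passing from the commutant to the bicommutant.
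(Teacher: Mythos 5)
Your proposal is correct and follows essentially the same route as the paper: identify $U(G_3)'$ with $G_3^0$-invariant multiplication operators via the MASA property of the translation algebra in the scalar (multiplicity-one) case, reduce to functions constant on orbits via Lemma~\ref{lem:orb}, and use the geometric fact of Remark~\ref{rmk:orb} that $R_1(\pi)$ preserves the $G_3^0$-orbits to conclude $U(r_1(\pi))\in U(G_3)''$. The only cosmetic difference is that you attribute the identification of the translation commutant with $L^\infty(\Omega_m)$ to an ergodicity-type argument, whereas the paper states it directly as the MASA property of $\T''$; the content is the same.
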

\begin{proof}It is enough to consider $W_3$. Let $Z\in U(G_3)'$. Since $U$ is a scalar representation then the translation algebra $\T''=\{U(a):a\in\RR^4\}''$ is a MASA  and $\T''=L^\infty(\Omega_m)$. 
Indeed, the translation unitaries  $(U(a)\phi)(p)=e^{iap}\phi(p)$ are multiplication operators and generate $L^\infty(\Omega_m)$ ultra-weakly. In particular $Z\in\T'=\T''$, hence it is a multiplication operator $Z=M_f$ by $f\in L^\infty(\Omega)$. 
Furthermore, $Z$ has to commute with $U(\Lambda_3)$ and $U(R_3)$, hence  $\forall t\in\RR$ and $\forall\theta \in [0,2\pi]$
$$U(\Lambda_3(t))ZU(\Lambda_3(t))^*=Z\quad\Leftrightarrow \quad f(\Lambda_3(t)^{-1}p)=f(p),\qquad \text{a.e.}\,p\in\Omega_m$$ 
and 
$$U(R_3(\theta))ZU(R_3(\theta))^*=Z\quad \Leftrightarrow f(R_3(\theta)^{-1}p)=f(p)\qquad \text{a.e.}\,p\in\Omega_m.$$ We can assume $f(p)$ to be constant on any $\Lambda_3$ and  $R_3$ orbit by Lemma \ref{lem:orb}. 

Now observe that any momentum on the hyperboloid $p=(p_0,p_1,p_2,p_3)$ can be connected to  the $R_1(\pi)p=(p_0,p_1,-p_2,-p_3)$ through a $R_3$-rotation and a $\Lambda_3$-boost as in Remark \ref{rmk:orb}. It follows that $f(p)=f(R_1(-\pi)p)$, for every $p\in\Omega_m$. As a consequence $$U(R_1(\pi))ZU(R_1(\pi))^*=Z\quad\text{as}\quad f(R_1(\pi)^{-1}p)=f(p),\qquad \forall p\in\Omega_m$$ and we conclude.
\end{proof}

Now, we can state the theorem.
\begin{theorem}\label{starint}Let $U=\int_{[0,+\infty)}{U_m}d\mu(m)$ where $\{U_m\}$ are (finite or infinite) multiples of the scalar representation of mass $m$,  then $U$ satisfies \eqref{eq:cond}. In particular if $(U,H)$ is a $U$-covariant net of standard subspaces, then the Duality and the B-W properties hold.\end{theorem}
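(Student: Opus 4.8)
The plan is to deduce this from Proposition~\ref{prop:scalar}, Proposition~\ref{prop:P} and Theorem~\ref{thm:starcor}. By Proposition~\ref{prop:scalar} every irreducible scalar representation $U_{m,0}$ of $\tPoi$ satisfies the modularity condition \eqref{eq:cond}. Writing each of the representations occurring in the decomposition as a multiple $U_m=U_{m,0}\otimes 1_{\K_m}$, with $\K_m$ the (finite- or infinite-dimensional) multiplicity Hilbert space, the first assertion of Proposition~\ref{prop:P} shows that every $U_m$ still satisfies \eqref{eq:cond}. It thus remains to check the disjointness hypothesis of the second assertion of Proposition~\ref{prop:P} for the field $m\mapsto U_m$, after which that proposition gives \eqref{eq:cond} for $U$ and Theorem~\ref{thm:starcor} yields the Duality and the Bisognano-Wichmann properties for every $U$-covariant net $(U,H)$.

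To verify the disjointness I would take $W=W_3$, so that $G_W=G_3$ contains the translation group $\T$, and fix $m\neq m'$ in $[0,+\infty)$. On $\H_{m,0}\otimes\K_m$, with $\H_{m,0}=L^2(\Omega_m,\delta(p^2-m^2)\theta(p_0)d^4p)$, the translations act by $(U_m(a)\phi)(p)=e^{iap}\phi(p)$, so the joint spectral measure of $U_m|_\T$ is supported on $\Omega_m$, a Borel subset of the dual group $\RR^{1+3}$ of $\T$; likewise that of $U_{m'}|_\T$ is supported on $\Omega_{m'}$. Since $\Omega_m\cap\Omega_{m'}=\emptyset$ whenever $m\neq m'$, any bounded operator intertwining $U_m|_\T$ and $U_{m'}|_\T$ maps the spectral subspace of $U_m|_\T$ over $\Omega_m$, which is all of $\H_{m,0}\otimes\K_m$, into the spectral subspace of $U_{m'}|_\T$ over $\Omega_m$, which is $\{0\}$, hence vanishes. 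Therefore $U_m|_\T$ and $U_{m'}|_\T$ are disjoint, and a fortiori so are $U_m|_{G_3}$ and $U_{m'}|_{G_3}$, for \emph{every} pair $m\neq m'$; in particular the $\mu$-a.e. disjointness required by Proposition~\ref{prop:P} holds. The measurability of $m\mapsto U_m$ needed to form the direct integral is part of the standing hypothesis on $U$.

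Consequently, by the second assertion of Proposition~\ref{prop:P}, $U=\int_{[0,+\infty)}U_m\,d\mu(m)$ satisfies \eqref{eq:cond}, and Theorem~\ref{thm:starcor} then concludes the proof. The only step calling for real care is the disjointness verification, and even that collapses to the elementary fact that distinct mass shells are disjoint subsets of momentum space, so that already the restrictions to the translation subgroup — a fortiori to $G_3$ — separate the fibres; everything else is a bookkeeping application of results established earlier.
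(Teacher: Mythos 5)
Your proposal is correct and follows exactly the paper's route: Proposition \ref{prop:scalar} for the irreducible scalar components, the first part of Proposition \ref{prop:P} for multiples, disjointness of the restrictions to $G_3$ (which the paper merely asserts and you justify, correctly, via the disjoint supports $\Omega_m$ of the translation spectral measures), then the second part of Proposition \ref{prop:P} and Theorem \ref{thm:starcor}. No gaps; you have simply spelled out the disjointness step that the paper leaves implicit.
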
 

\begin{proof} Unitary representations of $\Poi$ with different masses are disjoint, and they have disjoint restrictions to $G_3$ subgroup. The thesis becomes a consequence of Propositions  \ref{prop:P},  \ref{prop:scalar} and Theorem \ref{thm:starcor} .
\end{proof}

\begin{remark}\label{rmk:hel} Proposition \ref{prop:scalar} straightforwardly holds also for irreducible massless finite helicity representations as they are induced from a one-dimensional representation of the little group.
As a  consequence, an irreducible nonzero helicity representation $U$ cannot  act covariantly on a net of standard subspaces on wedges $H$. Indeed, the B-W property must hold as in Proposition \ref{prop:scalar} and $J_{W_3}$, the modular conjugation of $H(W_3)$, would implement the $j_3$ reflection on $U$ (cf. \cite{GL}). In particular, the PT operator  defined by $\Theta=J_{W_3}U(R_3(\pi))$  extends $U$ to an (anti-)unitary representation $\hat U$ of $\tilde\P_+$ and acts  covariantly  on $H$. On the other hand nonzero helicity representations are not induced by a self-conjugate representation of the little group and do not  extend to  anti-unitary representations of $\tilde{\P}_+$ \cite{var}. This shows a contradiction. 
\end{remark}
\begin{remark}\label{rmk:dim}
Consider the $\RR^{1+s}$ spacetime with $s\geq3$ and let $U$ be a scalar (unitary, positive energy) representation of $\Poi$. The one-parameter group $t\mapsto Z_{H(W_1)}(t)$ given in \eqref{eq:Z},   is generated by the multiplication operator by a real function of the form $f(p_2^2+...+p_s^2)$.  For each value of the radius $r=p_2^2+...+p_s^2$  there is a unique  $G_1$-orbit on $\Omega_m$  which is fixed by any transformation $R\in\Poi$ such that $RW_1=W_1'$, for instance $R_2(\pi)$.  In particular, the analysis of this section extends to any Minkowski spacetime $\RR^{1+s}$ with $s\geq3$. It fails in $2+1$ spacetime dimensions as $R_2(\pi)$ does not preserve $G_1$-orbits.

\end{remark}

\section{Massive counter-examples} \label{sect:infcoun}

Borchers, in \cite{bor}, showed that a unitary, positive energy Poincar\'e representation acting covariantly on a modular covariant von Neumann algebra net $\A$ in the vacuum sector, can only  differ from the modular representation  by a unitary representation of the Lorentz group. 
\begin{theorem}\label{bor}{\rm \cite{bor}.} Let $\A$ be a local quantum field theory von Neumann algebra net in the vacuum sector undergoing two different representations of the Poincar\'e group. Let $U_0$ be the representation implemented by wedge modular operators and $U_1$ be the second representation. Then there exists a unitary representation of the Lorentz Group $G(\Lambda)$ defined
$$G(\Lambda)=U_1(a,\Lambda)U_{0}(a,\Lambda)^*.$$
Furthermore, $G(\Lambda)$ commutes with $U_0(a,\Lambda')$ for all $a\in\RR^{1+3}$, $\Lambda,\,\Lambda'\in\cL_+^\uparrow$ and the $G(\Lambda)$ adjoint action on $\A$ implements automorphisms of local algebras, i.e. maps any local algebra into itself.
\end{theorem}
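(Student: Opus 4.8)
The plan is to set $G(\Lambda) \dot= U_1(a,\Lambda)U_0(a,\Lambda)^*$ and first verify that this expression is independent of the translation $a$, so that $G$ is a well-defined function of $\Lambda\in\cL_+^\uparrow$. The key observation is that both $U_0$ and $U_1$ implement the \emph{same} action of $\cP_+^\uparrow$ on the net $\A$ (they are covariant representations of the same net): hence for any $\Lambda$ the unitary $V(a,\Lambda)\dot=U_1(a,\Lambda)U_0(a,\Lambda)^*$ acts trivially by conjugation on every local algebra, i.e.\ $V(a,\Lambda)\in\A(\cO)'$ for every region $\cO$, and therefore $V(a,\Lambda)$ lies in the center-like commutant $\bigcap_\cO\A(\cO)'$. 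Actually I would rather argue directly: $V(a,\Lambda)$ commutes with $U_0(b,\mathbf 1)=U_1(b,\mathbf 1)$ for all translations $b$ — here one uses that both representations restrict to the \emph{same} translation subgroup, since translations are fixed canonically by the modular structure of two opposite wedges (Borchers' theorem, Theorem \ref{Borch}), so $U_0|_{\mathbb R^4}=U_1|_{\mathbb R^4}$. Then the dependence of $V(a,\Lambda)$ on $a$ is governed by $V(a,\Lambda)=U_1(a,\mathbf1)V(0,\Lambda)U_0(a,\mathbf1)^*=U_0(a,\mathbf1)V(0,\Lambda)U_0(a,\mathbf1)^*$, and since $V(0,\Lambda)$ commutes with all translations this equals $V(0,\Lambda)$. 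So $G(\Lambda)\dot=V(0,\Lambda)=V(a,\Lambda)$ is well-defined.

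Next I would check that $\Lambda\mapsto G(\Lambda)$ is a group homomorphism. Using the cocycle-type identity coming from the fact that both $U_0$ and $U_1$ are genuine representations of $\tPoi$, together with the just-established translation-independence and the commutation of $G(\Lambda)$ with all translations, one computes
\begin{equation*}
G(\Lambda_1)G(\Lambda_2)=U_1(\Lambda_1)U_0(\Lambda_1)^*U_1(\Lambda_2)U_0(\Lambda_2)^* = U_1(\Lambda_1)U_1(\Lambda_2)U_0(\Lambda_2)^*U_0(\Lambda_1)^* = G(\Lambda_1\Lambda_2),
\end{equation*}
where the middle step uses $U_0(\Lambda_1)^*U_1(\Lambda_2)=U_1(\Lambda_2)U_0(\Lambda_1)^*$. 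That last commutation is the crux: $U_0(\Lambda_1)^*U_1(\Lambda_2)U_0(\Lambda_1)=U_1(\Lambda_1^{-1}\Lambda_2\Lambda_1)G(\Lambda_1^{-1}\Lambda_2\Lambda_1)^*\cdots$ — one has to unwind this carefully, and I expect this bookkeeping, rather than any deep point, to be where the argument is most delicate. The cleanest route is: show $G(\Lambda)$ commutes with $U_0(\cL_+^\uparrow)$ first (see below), and homomorphism then follows formally. Strong continuity of $G$ is inherited from that of $U_0$ and $U_1$.

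To show $G(\Lambda)$ commutes with $U_0(a,\Lambda')$ for all $a,\Lambda'$: commutation with translations was already obtained. For the Lorentz part, one uses that $U_0$ is the \emph{modular} representation, so $U_0(a,\Lambda)$ is built from the modular operators $\Delta_{\A(W)}^{it}$ of wedge algebras; since $G(\Lambda)$ maps every local algebra onto itself (being in all the $\A(\cO)'$ up to the trivial conjugation — more precisely $\mathrm{Ad}\,G(\Lambda)$ fixes each $\A(\cO)$), it commutes with the modular groups $\Delta_{\A(W)}^{it}$ by the uniqueness of the modular automorphism group, hence with all of $U_0(\cP_+^\uparrow)$; the boosts generate the Lorentz group together with rotations, which are themselves such modular-type operators for suitable wedge intersections, so commutation extends to all $\Lambda'$. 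Finally, the statement that $\mathrm{Ad}\,G(\Lambda)$ maps each local algebra into itself is exactly the reformulation of $U_1(a,\Lambda)\A(\cO)U_1(a,\Lambda)^*=\A((a,\Lambda)\cO)=U_0(a,\Lambda)\A(\cO)U_0(a,\Lambda)^*$, i.e.\ $G(\Lambda)\A(\cO)G(\Lambda)^*=\A(\cO)$, which is immediate once $G$ is known to be well-defined. The main obstacle, as noted, is the careful verification of the intertwining relation $U_0(\Lambda_1)^*U_1(\Lambda_2)=U_1(\Lambda_2)U_0(\Lambda_1)^*$ that underlies both the homomorphism property and the commutation with $U_0$; everything else is formal manipulation given Theorem \ref{Borch} pinning down the common translation subgroup.
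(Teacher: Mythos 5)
The paper itself offers no proof of this statement: Theorem \ref{bor} is quoted from Borchers \cite{bor}, so there is no in-paper argument to compare yours against, and I can only assess your proposal on its own terms. Your overall strategy --- define $G$ as the cocycle $U_1U_0^*$, show it is independent of $a$, commutes with $U_0$, and normalizes each local algebra --- is the right skeleton. But two steps are genuinely broken or missing. The opening claim that $V(a,\Lambda)=U_1(a,\Lambda)U_0(a,\Lambda)^*$ ``acts trivially by conjugation on every local algebra, i.e.\ $V(a,\Lambda)\in\A(\cO)'$'' is false: covariance of both representations only gives $\mathrm{Ad}\,V(a,\Lambda)\,\A(\cO)=\A(\cO)$, an automorphism of each local algebra, not the identity on it. Were $V$ in every $\A(\cO)'$, irreducibility of the vacuum sector would force $G(\Lambda)$ to be a scalar, i.e.\ $U_1=U_0$ up to a phase --- contradicting the counterexamples of Section \ref{sect:infcoun}, where $G(\Lambda)=V(\Lambda)\otimes\mathbf 1$ is manifestly not scalar. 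You do pivot away from this, but the correct replacement --- $\mathrm{Ad}\,G(\Lambda)$ maps each $\A(\cO)$ \emph{onto} itself \emph{and} $G(\Lambda)\Omega=\Omega$, hence $G(\Lambda)$ commutes with every wedge Tomita operator and so with every $\Delta_{\A(W),\Omega}^{it}$ --- is the engine of the whole proof, and you never record the vacuum invariance $G(\Lambda)\Omega=\Omega$ that it requires.

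Second, the coincidence $U_0|_{\RR^4}=U_1|_{\RR^4}$ does not follow from Theorem \ref{Borch} as you invoke it: that theorem gives commutation relations, not uniqueness of the positive-energy translation group. The standard route is to apply the commutation relations to \emph{both} representations, $\Delta_{\A(W)}^{it}U_j(a)\Delta_{\A(W)}^{-it}=U_j(\Lambda_W(-2\pi t)a)$ for $j=0,1$ and $a$ lightlike in the $W$-directions, so that $G(a):=U_1(a)U_0(a)^*$ satisfies $G(a)=G(e^{\mp 2\pi t}a)$ once it is also known to commute with $\Delta_{\A(W)}^{it}$; strong continuity as the parameter scales to $0$ then gives $G(a)=\mathbf 1$, and such lightlike directions generate all translations. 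With that in place, your self-declared ``crux'' $U_0(\Lambda_1)^*U_1(\Lambda_2)=U_1(\Lambda_2)U_0(\Lambda_1)^*$ is not an independent obstacle: once $G(g)$ commutes with every $\Delta_{\A(W)}^{it}$, and one notes that the wedge modular unitaries \emph{generate} all of $U_0(\Poi)$ (boosts of translated wedges already yield the translations, and boosts generate the rotations, so no auxiliary ``modular-type operators for wedge intersections'' are needed), commutation with all of $U_0$ and the homomorphism property $G(g_1g_2)=G(g_1)G(g_2)$ follow by exactly the formal computation you wrote. In short: the architecture is right, but as written the argument starts from a false premise and leaves both of its load-bearing steps unproved.
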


With this hint it is possible to build up Poincar\'e covariant nets picturing the above situation: {\it modular covariance without the B-W property}. We are going to make explicit computations on this kind of counterexamples in order to understand what may prevent the B-W property. Here, we study the massive case. The massless case can be found in \cite{LMR}.

Consider $U_{m,s}$, the  $m$-mass, $s$-spin, unitary, irreducible representation of the  Poincar\'e group $\tPoi$ and $H:W\mapsto H(W)$ its canonical net of standard subspaces. 
Let $p\in\Omega_m$, $A_p\dot=\sqrt{{\utilde p}/{m}}$, where $\utilde p=p_0\cdot \eins+\sum_{i=1}^3p_i\sigma_i$ is the $\SLC$ element  implementing the boost sending the point $q_m=(m,0,0,0)$ to $p$. An explicit form of $U_{m,s}$ is the following
$$(U_{m,s}(a,A)\phi)(p)=e^{iap}\D^s(A_p^{-1}AA_{\Lambda^{-1}p})\phi(\Lambda(A)^{-1}p),$$
where $\D^s$ is the $s$-spin representation of $\SU(2)$ on the $2s+1$ dimensional Hilbert space $\mathfrak{h}_s$ and 
\begin{align*}
\phi\in\H_{m,s}&\dot=\mathfrak{h}_s\otimes L^2(\Omega_m,\delta(p^2-m^2)\theta(p_0)d^4p)\\&=\CC^{2s+1}\otimes L^2(\Omega_m,\delta(p^2-m^2)\theta(p_0)d^4p).
\end{align*}

Let $V$ be a real unitary, nontrivial,  $\SLC$-representation  on an Hilbert space $\K$, i.e. there exists an anti-unitary involution $J$ on the Hilbert space $\K$, commuting with $V$ such that the  real vector space $K\subset\K$ of $J$-fixed vectors,  is a standard subspace and $$V(\SLC)K=K.$$ In particular $JK=K$ and $ \Delta_K=1$. 

We can define the following net of standard subspaces, $$K\otimes H:\W\ni W\rightarrow K\otimes H(W) \subset\K\otimes \mathcal{H}.$$  
We can see  two Poincar\'e representations acting covariantly on $K\otimes H$: $$U_I(a,A)\equiv1\otimes U_{m,s}(a,A)\qquad A\in \SLC,\,a\in\RR^{1+3}$$ and $$U_V(a,A)\equiv V(A)\otimes U_{m,s}(a,A)\qquad A\in \SLC,\,a\in\RR^{1+3}.$$ $U_I$ is implemented by $K\otimes H$ modular operators and the B-W property holds w.r.t. $U_I$ (cf. Lemma 2.6 in \cite{LMR}).   Note that for $s=0$,  $U_I$ satisfies the condition (\ref{eq:cond}), hence the B-W property  also by Theorem \ref{thm:starcor}. 

$U_V $ decomposes in a direct sum of infinitely many inequivalent representations of mass $m$, i.e. infinitely many spins appear. Indeed, consider the Lorentz transformation $A_p=\sqrt{\utilde p/m}$ and the unitary operator on $\H\otimes\K$ 
\begin{equation}\label{intert}W:\K\otimes\H\ni\left(p\mapsto\phi(p)\right)\longmapsto \left(p\mapsto (V(A_p^{-1})\otimes1_{\CC^{2s+1}})\phi(p)\right)\in\K\otimes\H\end{equation}
as $\K\otimes\H=L^2(\Omega_m,\K\otimes\CC^{2s+1},\delta(p^2-m^2)\theta(p_0)d^4p).$
We can define a unitarily equivalent representation $U'_V=WU_VW^*$ as follows:
$$(U_V'(a,A)\phi)(p)=e^{ia\cdot p}(V(A_p^{-1}AA_{\Lambda(A)^{-1}p})\otimes \D^s(A_p^{-1}AA_{\Lambda(A)^{-1}p}))\phi(\Lambda(A^{-1}p)).$$
It is easy to see that $A_p^{-1}AA_{\Lambda(A)^{-1}p}\in\mathrm{Stab}(m,0,0,0)=\SU(2)$.

As we are interested in the disintegration of $U_V$ it is not restrictive to assume that $V$ is irreducible. Unitary irreducible representations  of $\SLC$, denoted by $V_{\rho,n}$, are labelled by pairs of numbers $(\rho,n)$ such that  $\rho\in\RR$ and $n\in\ZZ_+$. The restriction of $V_{\rho,n}$ to $\SU(2)$ decomposes in $\bigoplus_{s=n/2}^{+\infty} \D^s$ (see, for example,  \cite{Ru,CH}). 
Thus, if  $V=V_{\rho,n}$ is an irreducible $\SLC$ representation then
\begin{equation}\label{eq:decomp} 
U_V\simeq\bigoplus_{i=\frac{n}{2}}^\infty\,\bigoplus_{j=|s-i|}^{s+i}U_{m,j}\end{equation} 
since $U'_V$, hence $U_V$, decomposes according to the $(V\otimes \D^s)|_{\SU(2)}$ decomposition into irreducible representations.

Note that  any  representation class appears with finite multiplicity and  $U_V$ does not satisfy the condition \eqref{eq:cond}. Furthermore, we can conclude that having modular covariance without the B-W property  requires to the ``wrong" representation the presence of an infinite family of inequivalent Poincar\'e representations possibly with finite multiplicity.

$U_I$ cannot act covariantly on the $U_V$-canonical net $H_V$. We can see it explicitly. Let $W_1$ be the wedge in $x_1$ direction and $W=gW_1$ where $g\in\cL_+^\uparrow$ and $W_1\neq W$. If $U_I(g)H_V(W_1)=H_V(W)$ then $U_I(g)\Delta_{H_V(W_1)}^{-it}U_I(g)^*=\Delta_{H_V(W)}^{-it}$ where $\Delta_{H_V(W_1)}$ and $\Delta_{H_V(W)}$ are the modular operator of $H_V(W_1)$ and $H_V(W)$, respectively. As
\begin{align*}
U_I(g)&\Delta_{H_V(W_1)}^{-it}U_I(g)^*=\\
&=(1\otimes U_{m,s}(g))(V(\Lambda_{W_1}(2\pi t))\otimes U_{m,s}(\Lambda_{W_1}(2\pi t)))(1\otimes U_{m,s}(g))^*\\
& 
=V(\Lambda_{W_1}(2\pi t))\otimes U_{m,s}(g\Lambda_{W_1}(2\pi t))g^{-1})\end{align*} 
but $\Delta_{H_V(W)}^{-it}=V(g\Lambda_{W_1}(2\pi t)g^{-1})\otimes U_{m,s}(g\Lambda_{W_1}(2\pi t))g^{-1})$, we get the contradiction unless $V$ is trivial.


The modular covariance and the identification of the geometric and the algebraic PCT operators  imply the uniqueness of the covariant representation. 
\begin{proposition}
With the assumptions of Theorem \ref{bor} assume that the algebraic PCT operator defined in  \eqref{eq:algpt} implements the $U_1$-PCT operator too.
Then $U_1=U_0$.
\end{proposition}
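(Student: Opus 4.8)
The plan is to feed the hypotheses into Borchers' structural result (Theorem \ref{bor}) and then squeeze the resulting ambiguity unitary $G(\Lambda)$ using the extra assumption about the PCT operator. By Theorem \ref{bor}, writing $U_0$ for the representation implemented by the wedge modular operators of the net and $U_1$ for the given representation, there is a unitary representation $G$ of the Lorentz group with $G(\Lambda)=U_1(a,\Lambda)U_0(a,\Lambda)^*$, commuting with all $U_0(a,\Lambda')$, and whose adjoint action maps every local algebra into itself. The goal is then to show $G$ is trivial.

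\smallskip
First I would pin down what ``the algebraic PCT operator implements the $U_1$-PCT operator'' buys us. For a fixed wedge, say $W=W_3$, the algebraic Tomita operator is $S_{alg,W}=J_{H(W)}\Delta_{H(W)}^{1/2}$, and by the identification of geometric and algebraic PCT operators (as in \cite{GL}, adapted to the standard-subspace/von Neumann setting) the conjugation $J_{H(W)}$ implements, up to a rotation, the PCT symmetry of $U_0$; explicitly $\Theta_0 = J_{H(W_3)}U_0(R_3(\pi))$ extends $U_0$ to an (anti-)unitary representation of $\tilde\P_+$. Modular covariance already gives that this same $J_{H(W)}$ implements the $W$-reflection of $U_0$. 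The hypothesis says the \emph{same} algebraic operator also implements the $U_1$-PCT (equivalently the $U_1$-wedge reflection), so $J_{H(W)}$ implements the $W_3$-reflection both for $U_0$ and for $U_1$.

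\smallskip
Now I would compute the ambiguity $G$ on the relevant one-parameter subgroups. The $W_3$-reflection $j_3$ acts on $\cL_+^\uparrow$ by the adjoint automorphism of \eqref{eq:1}; in particular $j_3\Lambda_3(t)j_3 = \Lambda_3(t)$ and $j_3$ conjugates $\Lambda_\alpha(t)$ for $\alpha\ne 3$ to $\Lambda_\alpha(-t)$ (and similarly flips the rotations $R_\alpha$, $\alpha\ne 3$). Since $J_{H(W_3)}$ implements $j_3$ for both representations, for every $\Lambda\in\cL_+^\uparrow$ one gets $J_{H(W_3)}U_k(\Lambda)J_{H(W_3)} = U_k(j_3\Lambda j_3)$ for $k=0,1$. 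Dividing, $J_{H(W_3)}G(\Lambda)J_{H(W_3)} = G(j_3\Lambda j_3)$. On the other hand $G$ commutes with all of $U_0(\cL_+^\uparrow)$ — in particular with $U_0(R_1(\pi))$, $U_0(R_2(\pi))$, $U_0(\Lambda_3(s))$ — and $G$ is a genuine unitary \emph{representation} of the (connected, simple modulo center) Lorentz group. Conjugating a boost, e.g. $R_1(\pi)\Lambda_3(t)R_1(\pi)^{-1}=\Lambda_3(-t)$, and using that $G$ is a homomorphism, one finds $G(\Lambda_3(t))$ commutes with $U_0(R_1(\pi))$ while being intertwined to $G(\Lambda_3(-t))=G(\Lambda_3(t))^{-1}$ by $J_{H(W_3)}$; combined with the Borchers commutation $[G,U_0(\cL_+^\uparrow)]=0$ and the fact that $J_{H(W_3)}U_0(R_1(\pi))J_{H(W_3)}=U_0(R_1(\pi))^{-1}$, this forces $G(\Lambda_3(t))$ to be invariant under the modular group $\Delta_{H(W_3)}^{it}=U_0(\lambda_3(-2\pi t))$ — but since $G$ commutes with the modular group anyway, the real content is that $G$ is fixed by $J_{H(W)}$-conjugation for every wedge. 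A unitary representation of the Lorentz group which is simultaneously invariant under all wedge reflections and commutes with the full $U_0(\cL_+^\uparrow)$ must be trivial: the wedge reflections together with $\cL_+^\uparrow$ generate $\cL_+$, which acts without nonzero invariant vectors on any nontrivial irreducible, so $G=\mathbf 1$, hence $U_1=U_0$. The main obstacle I anticipate is the bookkeeping in this last step — carefully transporting the reflection covariance of $J_{H(W)}$ for \emph{all} wedges $W$ (not just $W_3$) and checking that ``$G$ invariant under $\Ad J_{H(W)}$ for all $W$ and $G\in U_0(\cL_+^\uparrow)'$'' genuinely exhausts enough of $\cL_+$ to kill $G$; one may need to invoke irreducibility of the net's action or decompose $U_0$ into irreducibles and argue componentwise, exactly as Borchers does.
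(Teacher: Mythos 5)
Your overall strategy is the right one, and the quotient relation you derive from the hypothesis, $J_{H(W_3)}G(\Lambda)J_{H(W_3)} = G(j_3\Lambda j_3)$, is exactly half of the argument. But there is a genuine gap in the other half: you need the complementary relation $J_{H(W_3)}G(\Lambda)J_{H(W_3)} = G(\Lambda)$, and you never actually establish it. You assert that ``the real content is that $G$ is fixed by $J_{H(W)}$-conjugation for every wedge,'' but the chain of reasoning offered for this is both incorrect in detail and, as you yourself half-admit, circular. The specific error: $j_3$ commutes with the $W_3$-boost, so $J_{H(W_3)}$ intertwines $G(\Lambda_3(t))$ with $G(j_3\Lambda_3(t)j_3)=G(\Lambda_3(t))$, \emph{not} with $G(\Lambda_3(-t))=G(\Lambda_3(t))^{-1}$; the identity $R_1(\pi)\Lambda_3(t)R_1(\pi)^{-1}=\Lambda_3(-t)$ concerns a rotation in $\cL_+^\uparrow$, not the reflection implemented by $J_{H(W_3)}$, and conflating the two is what makes the step collapse into the vacuous statement that $G$ commutes with the modular group (which Borchers already gives you). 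The gap is real and not merely cosmetic: from $J G(\Lambda)J=G(j_W\Lambda j_W)$ together with $[G,U_0(\cL_+^\uparrow)]=0$ alone one cannot conclude $G=\mathbf 1$ (take $\H=\K\otimes\H_0$, $U_0=1\otimes u_0$, $G=g\otimes 1$, $J=j_\K\otimes j_0$ with $j_\K$ an anti-unitary conjugating $g(\Lambda)$ into $g(j_W\Lambda j_W)$).

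The missing ingredient is the one clause of Theorem \ref{bor} that you quote but never use: the adjoint action of $G(\Lambda)$ maps every local algebra into itself. Since $G(\Lambda)=U_1(a,\Lambda)U_0(a,\Lambda)^*$ fixes the vacuum, it follows that $G(\Lambda)H(W)=G(\Lambda)\overline{\A(W)_{sa}\Omega}=H(W)$ for every wedge $W$, and then Lemma \ref{lem:sym} yields $G(\Lambda)J_{H(W)}G(\Lambda)^*=J_{H(W)}$, i.e. $J_{H(W)}G(\Lambda)J_{H(W)}=G(\Lambda)$. This is precisely the paper's argument. Combining with your quotient relation gives $G(\Lambda)=G(j_W\Lambda j_W)$ for all $\Lambda$ and $W$; choosing $\Lambda$ a boost with $j_W\Lambda j_W=\Lambda^{-1}$ (e.g. $\Lambda_1(t)$ with $W=W_3$) forces $G$ to be trivial on a one-parameter boost group and hence on all of $\cL_+^\uparrow$. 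Note also that your closing justification via ``no nonzero invariant vectors'' is the wrong notion --- what is needed is that the homomorphism $G$ is trivial, which follows from the relation just stated (or from simplicity of $\cL_+^\uparrow$ modulo center), not from an absence of invariant vectors.
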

\begin{proof} With the notations of Theorem \ref{bor}, we consider the representation $U_1(a,\Lambda) = G(\Lambda)U_0(a,\Lambda)$ of $\tPoi$  on $\H$. $G$ fixes any subspace $H(W)=\overline{\A(W)\Omega}$ and in particular $J_{H(W)}G(\Lambda)J_{H(W)}=G(\Lambda)$, $\forall \Lambda\in\cL_+^\uparrow $. It follows that $J_{H(W)}$ has the correct commutation relations with  $U_1$ if and only if $G=1$.\end{proof}

We have   seen that  given a Poincar\'e covariant net with the B-W property $(U,H)$ we can find a second covariant Poincar\'e representation $\tilde U$  if the commutant $U(\tPoi)'$ is large enough, i.e. when the Poincar\'e representation implemented by the modular operators has infinite multiplicity. Furthermore, there are no conditions on the spin of $\tilde U$. In particular, counterexamples can produce wrong relations between spin and statistics. For instance assuming $U_I(r(2\pi))={\bf1}$ and $V(r(2\pi))=-{\bf1}$ where $r(2\pi)$ is the $2\pi$-rotation. The \eqref{eq:cond} condition does not hold for $U_V$ and a wrong spin-statistics relation may arise whenever the B-W property fails.

As we said in the introduction known counterexamples to modular covariance are very artificial and not all the basic relativistic and quantum assumptions are respected. It is interesting to look for natural counterexamples to modular covariance, in the class of representations excluded by this discussion, if they exist. In \cite{DHRII} it is shown that assuming finite one-particle degeneracy, then the Spin-Statistics Theorem holds. This accords with the analysis obtained by Mund in  \cite{M}. We expect that an algebraic proof for the B-W property can be established, without assuming finite multiplicity of sub-representations. 

\section{Outlook}\label{sect:CO}

Now we come to an outlook on  the relation between the split and the B-W properties.
\begin{definition}\label{def:split}\cite{lodo}
Let $(\N\subset \M,\Omega)$ be a \textit{standard inclusion} of von Neumann algebras, i.e.\! $\Omega$ is a cyclic and separating vector for $N,\, M$ and $N'\cap M$.

A standard inclusion $(\N\subset \M,\Omega)$ is \textit{split} if there exists a type I factor $\F$ such that $\N\subset \F \subset \M$.

A Poincar\'e covariant net $(\A,U,\Omega)$ satisfies the \textit{split property} if  the von Neumann algebra inclusion $(\A(O_1)\subset\A(O_2),\Omega)$ is split, for every compact inclusion of bounded causally closed regions $O_1\Subset O_2$.
\end{definition}
In a natural way one can define the split property for an inclusion of subspaces by second quantization: {\it let $K\subset H \subset \H$ be an inclusion of standard subspaces of an Hilbert space $\H$ such that $K'\cap H$ is standard, then the inclusion $K\subset H$ is split if its second quantization inclusion $R(K)\subset R(H)$ is standard split w.r.t. the vacuum vector $\Omega\in e^\H$}.  The second quantization respects the lattice structure (cf. \cite{A}). Here, we just deal with the Bosonic second quantization (see, for example, \cite{LRT}).

We want a coherent first quantization version of the split property. We assume $H$ and $K$ to be {\bf factor} subspaces, i.e. $H\cap H'=\{0\}=K\cap K'$. We need the following theorem.
\begin{theorem}\label{teo:gf}{\rm \cite{figu}.} Let $H$ be a standard subspace and $R(H)$ be its second quantization.
\begin{itemize}
\item Second quantization factors are type I if and only if $\Delta_H|_{[0,1]}$ is a trace class operator where $\Delta_{H}|_{[0,1]}$ is the restriction of the $H$-modular operator $\Delta_H$ to the spectral subspace relative to the interval ${[0,1]}$
\item Second quantization factors which are not type I are type III. 
\end{itemize}
\end{theorem}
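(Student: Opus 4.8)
The plan is to first identify the modular data of $(R(H),\Omega)$ and then reduce the statement to the classical classification of quasi-free (Araki--Woods) factors. Recall the well-known fact (Araki; Eckmann--Osterwalder; see also \cite{LRT,A}) that on the Bosonic Fock space $e^{\H}$ the vacuum $\Omega$ is cyclic and separating for $R(H)$ precisely when $H$ is standard, that $H\cap H'=\{0\}$ gives $R(H)\cap R(H)'=R(H\cap H')=\CC\eins$, so that $R(H)$ is a factor, and --- crucially --- that the modular operator and conjugation of $(R(H),\Omega)$ are the second quantizations of the one-particle ones, $\Delta_{R(H)}=\Gamma(\Delta_H)=e^{d\Gamma(\log\Delta_H)}$ and $J_{R(H)}=\Gamma(J_H)$. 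Hence the vacuum state on $R(H)$ is the quasi-free state on the CCR algebra over $(H,\Im\langle\cdot,\cdot\rangle)$ with symmetrized two-point function $\Re\langle\cdot,\cdot\rangle$, and the type of $R(H)$ is governed entirely by the spectrum of $\Delta_H$.

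Using $J_H\Delta_HJ_H=\Delta_H^{-1}$ one folds the one-particle space onto the spectral part of $\Delta_H$ in $(0,1]$, so it is enough to understand $\Delta_H|_{[0,1]}$. If this restriction is trace class it is in particular compact, hence $\Delta_H$ has pure point spectrum; pairing each eigenvector with its $J_H$-image produces an orthogonal decomposition $\H=\bigoplus_n\H_n$ into finite-dimensional $\Delta_H$- and $J_H$-invariant subspaces with $H=\bigoplus_n H_n$, whence $R(H)$ is the infinite von Neumann tensor product $\bigotimes_n R(H_n)$ with respect to $\Omega=\bigotimes_n\Omega_n$. This exhibits $R(H)$ as an infinite tensor product $\bigotimes_n(\B(\mathfrak h_n),\omega_n)$ of type $\mathrm{I}_\infty$ factors, where $\mathfrak h_n$ is a one-mode Fock space and $\omega_n$ the Gibbs state of one-particle occupation number $\rho_n=\lambda_n/(1-\lambda_n)$, $\lambda_n$ the $n$-th eigenvalue of $\Delta_H$ in $(0,1)$. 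By the Powers--St{\o}rmer / Araki--Woods machinery such a product is quasi-equivalent to a Fock representation --- equivalently $R(H)$ is type $\mathrm{I}$ --- iff the one-particle density operator $\bigoplus_n\rho_n$ is trace class, i.e.\ $\sum_n\rho_n<\infty$; since convergence of this series forces $\lambda_n\to 0$ and then $\rho_n\sim\lambda_n$, this is equivalent to $\sum_n\lambda_n<\infty$, that is, to $\Delta_H|_{[0,1]}$ being trace class. If $\Delta_H|_{[0,1]}$ is only compact with $\sum_n\lambda_n=\infty$, the Araki--Woods classification yields a type $\mathrm{III}$ factor; and if $\Delta_H|_{[0,1]}$ is not even compact (continuous spectrum away from $\{1\}$) the decomposition above is impossible and one argues, e.g.\ via Takesaki's characterization of semifiniteness through a factorization $\Delta_{R(H)}=h\cdot h'$ of the modular operator (impossible on Fock space unless $\H$ splits as above and $\Delta_H$ is diagonalizable), that $R(H)$ is again type $\mathrm{III}$.

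The step I expect to be the main obstacle is the dichotomy ``not type $\mathrm{I}\Rightarrow$ type $\mathrm{III}$'', i.e.\ excluding type $\mathrm{II}$. The decisive point is that the factoriality hypothesis $H\cap H'=\{0\}$ keeps $1$ out of the point spectrum of $\Delta_H$: on the reducing subspace $\mathcal K_0=\ker(\Delta_H-\eins)$ one has $S_{H_0}=J_{H_0}$ a conjugation, so $\mathcal K_0=H_0\oplus_{\R}iH_0$ is a $\Re\langle\cdot,\cdot\rangle$-orthogonal decomposition, $H_0'=H_0$ inside $\mathcal K_0$, and therefore $H_0\subset H\cap H'$; thus $H_0=\{0\}$. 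Consequently, in the tensor-product picture every mode carries a genuinely thermal Gibbs state ($0<\lambda_n<1$), and such a thermal infinite tensor product of type $\mathrm{I}_\infty$ factors is either of type $\mathrm{I}$ or admits no trace at all (type $\mathrm{III}$) --- the type $\mathrm{II}_1$ alternative being tied precisely to the now-forbidden ``infinite-temperature'' modes $\lambda_n=1$, and type $\mathrm{II}_\infty$ being ruled out by the same token. It then remains to make rigorous the identification of the trace-class threshold: that the defect of purity of $\bigotimes_n(\B(\mathfrak h_n),\omega_n)$, which governs quasi-equivalence to a Fock state and hence type $\mathrm{I}$-ness, is measured precisely by $\sum_n\rho_n$, and that $\sum_n\rho_n<\infty\iff\sum_n\lambda_n<\infty$. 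This last equivalence is what turns the criterion into the clean statement in terms of $\Delta_H|_{[0,1]}$.
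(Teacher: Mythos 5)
First, a point of order: the paper does not prove this theorem --- it is quoted from Figliolini--Guido \cite{figu} as an external input --- so there is no internal proof to compare yours with; what follows assesses your argument on its own terms. The first half of your reduction is sound and standard: $\Delta_{R(H)}=\Gamma(\Delta_H)$, $J_{R(H)}=\Gamma(J_H)$; factoriality of $R(H)$ is equivalent to $H\cap H'=\{0\}$ via Araki's lattice theorem and duality $R(H)'=R(H')$; and your observation that $H\cap H'=\{0\}$ forces $\ker(\Delta_H-\mathbf 1)=\{0\}$ (since on that kernel $S_H=S_{H'}=J_H$, whose fixed real subspace lies in $H\cap H'$) is correct and is indeed the mechanism excluding the tracial modes. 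When $\Delta_H|_{[0,1]}$ is compact, the two-mode decomposition $\H=\bigoplus_n(\CC\xi_n\oplus\CC J_H\xi_n)$, the identification of $R(H)$ with an ITPFI factor $\bigotimes_n(\mathcal B(\mathfrak h_n),\omega_n)$ with geometric eigenvalue lists $\{(1-\lambda_n)\lambda_n^k\}_{k\geq 0}$, and the Araki--Woods type $\mathrm{I}$ criterion $\sum_n(1-\mu_{n,1})=\sum_n\lambda_n<\infty$ do give exactly the trace-class threshold.

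There are, however, two genuine gaps. (i) Even in the compact case, the dichotomy ``not type $\mathrm{I}$ $\Rightarrow$ type $\mathrm{III}$'' is asserted rather than proved: you dismiss $\mathrm{II}_1$ and $\mathrm{II}_\infty$ by appeal to the absence of ``infinite-temperature modes,'' but $\mathrm{II}_\infty$ ITPFI factors do exist, and ruling them out here requires actually computing the asymptotic ratio set (equivalently, showing the modular flow is not inner) from the block ratio sets $\lambda_n^{\ZZ}$ with $\lambda_n<1$. This is a doable Araki--Woods computation, but it is the content of the second bullet, not a corollary of the setup. (ii) More seriously, the case where $\Delta_H|_{[0,1]}$ is \emph{not} compact is precisely where the tensor-product picture collapses, and your parenthetical claim that a factorization $\Delta_{R(H)}=h\cdot h'$ is ``impossible on Fock space unless $\Delta_H$ is diagonalizable'' is exactly the statement that needs proof; as written the argument is circular. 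Treating continuous spectrum --- e.g.\ by passing to $J_H$-invariant spectral subspaces $E_{[a,b]\cup[1/b,1/a]}\H$ and computing the Connes invariant $S(R(H))$, or by the quasi-equivalence criteria for quasi-free states of the CCR algebra --- is the substantive content of \cite{figu} and of the earlier Araki--Woods/van Daele analysis, and your proposal does not supply it.
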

The canonical intermediate factor,  for a standard split inclusion of von Neumann algebras $N\subset M$ is
$$\F=N\vee JNJ=M\cap JMJ$$ where $J$ is the modular conjugation associated with the relative commutant algebra $(N'\cap M,\Omega)$, cf. \cite{lodo}.  The standard inclusion $(N\subset M,\Omega)$ is split iff $\F$ is type I.
A canonical intermediate subspace can be analogously defined for standard split subspace inclusions $K\subset H$:
$$F=\overline{K+JK}=H\cap JH.$$ Second quantization of modular operators were computed in \cite{EO,LRT,F}.

Consider the following proposition:
\begin{proposition}\label{prop:wMB}
Let $K\subset H$ be an inclusion of standard subspaces, such that $K'\cap H$ is standard. Let $J$ be the modular conjugation of the symplectic relative complement $K'\cap H$ and $\Delta=\Delta_F$ be  the modular operator of the intermediate subspace $F=\overline{K+JK}$. Assume that $F$ is a factor. Then the following statements are equivalent.
\begin{itemize}
\item[1.] $ R(K)\subset R(H)$ is a split inclusion.
\item[2.] the operator $\Delta |_{[0,1]}$ is trace class.
\end{itemize}
\end{proposition}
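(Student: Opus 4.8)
The plan is to reduce the statement about the second-quantized inclusion $R(K)\subset R(H)$ to a condition on a single von Neumann algebra factor, and then apply Theorem \ref{teo:gf}. First I would recall that, by definition, $R(K)\subset R(H)$ is split iff there is a type I factor $\mathcal F$ with $R(K)\subset\mathcal F\subset R(H)$; since the inclusion is standard (because $K$, $H$ and $K'\cap H$ are all standard, so the vacuum $\Omega$ is cyclic and separating for $R(K)$, $R(H)$ and $R(K)'\cap R(H)=R(K'\cap H)$), the canonical intermediate factor is $\mathcal F=R(K)\vee \mathfrak J R(K)\mathfrak J=R(H)\cap \mathfrak J R(H)\mathfrak J$, where $\mathfrak J$ is the modular conjugation of $(R(K)'\cap R(H),\Omega)$. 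The key point is that the split property of the standard inclusion is \emph{equivalent} to $\mathcal F$ being type I — this is the Doplicher–Longo characterization cited from \cite{lodo}. So statement 1 is equivalent to ``$\mathcal F$ is type I''.

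Next I would identify $\mathcal F$ as a second-quantized factor. Since Bosonic second quantization $K\mapsto R(K)$ respects the lattice operations (cf. \cite{A}) and intertwines the symplectic complement with the von Neumann commutant, we have $R(K'\cap H)=R(K)'\cap R(H)$, and the modular conjugation of this relative commutant algebra is the second quantization $\Gamma(J)$ of the modular conjugation $J$ of the one-particle relative complement $K'\cap H$; in particular $\mathfrak J R(K)\mathfrak J=R(JK)$. Hence
\begin{equation*}
\mathcal F=R(K)\vee R(JK)=R(\overline{K+JK})=R(F),
\end{equation*}
so $\mathcal F=R(F)$ is precisely the second quantization of the intermediate one-particle subspace $F=\overline{K+JK}$. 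By hypothesis $F$ is a factor, so $R(F)$ is a second-quantization factor, and Theorem \ref{teo:gf} applies: $R(F)$ is type I if and only if $\Delta_F|_{[0,1]}$ is trace class, and otherwise it is type III. Combining this with the equivalence of the previous paragraph gives: $R(K)\subset R(H)$ is split $\iff$ $\mathcal F=R(F)$ is type I $\iff$ $\Delta_F|_{[0,1]}=\Delta|_{[0,1]}$ is trace class, which is exactly the claimed equivalence of statements 1 and 2.

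The step I expect to be the main obstacle — or at least the one requiring the most care — is the identification $\mathcal F=R(F)$, i.e. checking that the canonical intermediate factor of the von Neumann inclusion really is the second quantization of the canonical intermediate subspace. This needs (i) the lattice compatibility $R(K\vee L)=R(K)\vee R(L)$ and $R(K'\cap H)=R(K)'\cap R(H)$ for the Bosonic functor, and (ii) the fact that second quantization maps the one-particle modular conjugation of a standard subspace to the modular conjugation of the associated von Neumann algebra, so that the ``$\mathfrak J$'' built from the relative commutant algebra coincides with $\Gamma(J)$ and thus $\mathfrak J R(K)\mathfrak J=R(JK)$. Once that functoriality is in place, the rest is a direct concatenation of the Doplicher–Longo criterion \cite{lodo} and the second-quantization type classification of \cite{figu}; the factoriality assumption on $F$ is exactly what is needed to exclude intermediate cases in Theorem \ref{teo:gf}.
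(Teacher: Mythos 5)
Your argument is correct and follows essentially the same route as the paper: identify the canonical intermediate factor of $R(K)\subset R(H)$ with the second quantization $R(F)$ of $F=\overline{K+JK}$ (using the lattice/duality compatibility of the Bosonic functor and the fact that the modular conjugation of $R(K'\cap H)$ is the second quantization of $J$), and then apply the Doplicher--Longo characterization of split together with the type classification of Theorem \ref{teo:gf}. The paper's proof is just a terser version of the same two steps, so no further comment is needed.
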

\begin{proof}
($1.\Rightarrow 2.$) If the inclusion is split, then the intermediate canonical factor $R(K)\vee R(JK)$ is the second quantization of $\overline {K+ JK}$. In particular by Theorem \ref{teo:gf} the thesis holds.\\
($2.\Rightarrow 1.$) As $\Delta|_{[0,1]} $ is trace class, then the second quantization of $F$ is an intermediate type I factor between $R(K)$ and $R(H)$ by Theorem \ref{teo:gf}. Then the split property holds for the inclusion $R(H)\subset R(K)$.
\end{proof}
Let $U$ be an (anti-)unitary representation of $\cP_+$. We shall say  that $U$ is {\bf split} if the canonical net associated with $U$ satisfies the split property on bounded causally closed regions (defined through equation \eqref{HO}), i.e. the inclusion
$$H(O_1)\subset H(O_2)$$  is split for every $O_1\Subset O_2$ as above.

Scalar free fields satisfy Haag duality, thus \eqref{HO} holds (see, for example, \cite{O}). Furthermore, any scalar irreducible  representation is split (cf. \cite{BW,BDL}).
The following theorem is the first quantization analogue of Theorem 10.2 in \cite{lodo}.
\begin{theorem}\label{344}
Let $U$ be an (anti-)unitary representation of $\cP_+$, direct integral of scalar representations. If $U$ has the split property then $U=\int_{0}^{+\infty}U_m d\mu(m)$ where  $\mu$ is purely atomic on isolated points and for each mass there can only be a finite multiple of $U_{m,0}$.\end{theorem}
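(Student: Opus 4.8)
The plan is to mimic the strategy of Doplicher--Longo's argument (Theorem 10.2 of \cite{lodo}) in the first quantization setting, combining it with the trace-class criterion of Proposition \ref{prop:wMB} and the known modular structure of scalar wedge subspaces. First I would reduce to a concrete geometric situation: fix two concentric double cones $O_1\Subset O_2$ and pass to the standard inclusion $K=H(O_1)\subset H=H(O_2)$. Since $U$ is a direct integral of scalar representations, Theorem \ref{starint} gives the Bisognano--Wichmann property, hence Haag/symplectic duality, so the relative symplectic complement $K'\cap H$ is standard and the net on causally closed regions is well-behaved; one also has $H(O)=\int^{\oplus}H_m(O)\,d\mu(m)$ fibered over the mass, with each fiber the canonical net of the (multiple of the) scalar mass-$m$ representation.

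Next I would identify the intermediate canonical subspace $F=\overline{K+JK}=H\cap JH$ and, crucially, compute (or quote from \cite{EO,LRT,F} and the scalar-field literature) the spectrum of $\Delta_F$ fibered over the mass. The point of the split property, via Proposition \ref{prop:wMB}, is that $\Delta_F|_{[0,1]}$ must be trace class. Now I would argue that $\Delta_F = \int^{\oplus}\Delta_{F,m}\,d\mu(m)$ and that the trace is $\int \mathrm{Tr}(\Delta_{F,m}|_{[0,1]})\,d\mu(m)$ counted with multiplicities. For the single scalar representation $U_{m,0}$ each $\mathrm{Tr}(\Delta_{F,m}|_{[0,1]})$ is a strictly positive, finite number $c(m)$ depending continuously on $m\in(0,\infty)$ --- finiteness is exactly the known split property of scalar free fields (\cite{BW,BDL}), positivity because $F$ is a nontrivial factor. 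Trace-class of the direct integral then forces $\sum_m (\text{multiplicity at }m)\cdot c(m)<\infty$. Since $c(m)$ is bounded below by a positive constant on any compact mass interval, $\mu$ cannot have continuous parts or accumulation points in $(0,\infty)$, and at each atom the multiplicity must be finite; a separate (easy) argument handles the possible atom at $m=0$ and accumulation at infinity, where $c(m)$ does not go to zero either.

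The main obstacle I anticipate is the quantitative control of $c(m)=\mathrm{Tr}(\Delta_{F,m}|_{[0,1]})$: one needs not merely that it is finite for each $m$ but uniform lower bounds on compact sets and a non-vanishing estimate as $m\to 0$ and $m\to\infty$, together with the measurability/fibration statement $\Delta_F=\int^{\oplus}\Delta_{F,m}d\mu$ which requires some care because the geometric data (the region, the modular conjugation $J$ of $K'\cap H$) must be shown to act fiberwise compatibly with the direct integral decomposition over the mass. I would handle the lower bound by noting that $\Delta_{F,m}$ always has $1$ as an eigenvalue (the $J$-invariant directions coming from $F$ being a factor containing a copy of $K$ and $JK$), so $c(m)\geq$ some fixed positive quantity independent of $m$, which is all that is needed; the scaling covariance of the scalar representations under dilations relates different masses and lets one transport estimates. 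Once these pieces are in place, the conclusion that $\mu$ is purely atomic, concentrated on isolated points, with finite multiplicity at each, is immediate.
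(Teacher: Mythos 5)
Your route is the same as the paper's: Theorem \ref{starint} gives the B-W property, the net and the canonical intermediate subspace $F$ fiber over the mass, and Proposition \ref{prop:wMB} converts the split property into the trace-class condition on $\Delta_F|_{[0,1]}$. The two quantitative bridges you propose to finish with, however, both fail. First, the identity $\mathrm{Tr}\bigl(\Delta_F|_{[0,1]}\bigr)=\int \mathrm{Tr}\bigl(\Delta_{F,m}|_{[0,1]}\bigr)\,n(m)\,d\mu(m)$ holds only when $\mu$ is atomic: for a diffuse $\mu$ the left-hand side is automatically $+\infty$ while the right-hand side can be finite (take $n\equiv 1$, $c(m)$ bounded, and $\mu$ Lebesgue on a compact interval). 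Consequently the bound $\sum_m n(m)c(m)<\infty$ you extract, even with $c(m)\geq c_0>0$, does not exclude a continuous part of $\mu$ of finite total mass. The mechanism that does exclude it is compactness: trace-classness of $\Delta_F|_{[0,1]}$ forces $E_{(\delta,1]}(\Delta_F)=\int^{\oplus}E_{(\delta,1]}(\Delta_{F,m})\,d\mu(m)$ to be finite rank for every $\delta>0$, and a direct integral of nonzero projections over a diffuse measure is never finite rank; since inversion-symmetry of $\sigma(\Delta_{F,m})$ guarantees $E_{(0,1)}(\Delta_{F,m})\neq 0$ for every $m$, letting $\delta\downarrow 0$ kills the diffuse part and also bounds the multiplicities, with a continuity-in-$m$ estimate still needed to rule out accumulation of atoms.

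Second, your proposed uniform lower bound rests on the claim that $1$ is an eigenvalue of $\Delta_{F,m}$, and this is false precisely under the factor hypothesis $F\cap F'=\{0\}$ that Proposition \ref{prop:wMB} requires. If $\Delta_F\xi=\xi$ then $S_F$ and $S_{F'}=S_F^*$ both act as $J_F$ on $\xi$, so $\xi+J_F\xi$ and $i(\xi-J_F\xi)$ are fixed by both Tomita operators and hence lie in $F\cap F'=\{0\}$, forcing $\xi=0$. What is available instead is the spectral (not point-spectral) statement used above, or an explicit computation of $\sigma(\Delta_{F,m})$ from \cite{EO,LRT,F}. So the skeleton of your argument coincides with the paper's (which is itself terse on exactly these points), but the two steps singled out here would need to be replaced by the compactness/spectral-projection argument before the proof is complete.
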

\begin{proof}
As the B-W property holds, the net disintegrates according to the representation:
$$H=\int_0^{+\infty} H_m d\mu(m)$$where $H_m$ is the canonical net associated with $U_m.$\\
Fix a couple of bounded and causally closed regions $O_1\Subset O_2\subset\RR^{1+3}$. Irreducible components satisfy the split property. In particular by Proposition \ref{prop:wMB}, the restriction of the modular operator of the intermediate standard subspace $F$, defined as 
\begin{align*}F=\overline{H(O_1)+JH(O_1)}&=\int_0^{+\infty}\overline{ H_m(O_1)+ J_mH_m(O_1)_m} d\mu(m)\\&=\int_0^{+\infty}F_m\,d\mu(m),
\end{align*}
 to the spectral subset $[0,1]$,  has to be trace class. It follows that  $\mu$ has to be purely atomic and  for each isolated mass (no finite accumulation point) there can only be a finite multiple of the scalar representation.
\end{proof}
\begin{corollary} Let $(U,H)$ be a Poincar\'e covariant net of standard subspaces and $U$ be a $\Poi$-split representation. Assume that $U$ is a direct integral of scalar representations. Then the B-W and the duality properties hold.
\end{corollary}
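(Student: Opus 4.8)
The plan is to reduce the statement to Theorem \ref{starint}, which is the only place where the scalar hypothesis is used in an essential way. First I would unpack what ``direct integral of scalar representations'' means: we may write $U\cong\int_X U_\xi\, d\nu(\xi)$ with each $U_\xi$ unitarily equivalent to a scalar representation $U_{m(\xi),0}$ for some measurable mass assignment $m : X\to[0,+\infty)$. Pushing $\nu$ forward along $m$ and disintegrating $\nu$ over the fibres of $m$ then rewrites $U$ in the form $U\cong\int_{[0,+\infty)}U_m\, d\mu(m)$, where $\mu=m_*\nu$ and each $U_m$ is a finite or infinite multiple of $U_{m,0}$; this is exactly the hypothesis of Theorem \ref{starint}.

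Next I would simply invoke Theorem \ref{starint}, which gives that $U$ satisfies the modularity condition \eqref{eq:cond}. By Theorem \ref{thm:starcor}, every local $U$-covariant net of standard subspaces then satisfies the B-W and the duality properties; in particular the given net $H$ does, which is the assertion. No further use of the split hypothesis is needed at this point.

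I would close with a remark clarifying the logical order, since this is where the only subtlety lies. The $\Poi$-split assumption does not enter the proof of this corollary at all; its role is rather that, once the B-W property is established, the pair $(U,H)$ falls under the hypotheses of Theorem \ref{344}, so that splitting in addition forces $\mu$ to be purely atomic and concentrated on isolated points, with only finitely many copies of $U_{m,0}$ for each mass. One should emphasise that one cannot instead try to deduce B-W from Theorem \ref{344}: the proof of that theorem already presupposes B-W in order to disintegrate the net along the representation, so the route through Theorem \ref{starint} is the only admissible one. Since Theorems \ref{starint} and \ref{thm:starcor} carry the whole argument, the proof is essentially immediate, and the only mildly technical point — the routine measure-theoretic regrouping of the direct integral by mass — presents no real obstacle.
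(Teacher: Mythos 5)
Your proof is correct, but it takes a genuinely different route from the paper's. The paper's own proof first invokes Theorem \ref{344}: the split hypothesis forces the mass measure to be purely atomic, concentrated on isolated points, with only a finite multiple of $U_{m,0}$ at each mass; it then observes that this particular (atomic, finite-multiplicity) disintegration satisfies \eqref{eq:cond} and concludes via Theorem \ref{thm:starcor}. You instead bypass Theorem \ref{344} entirely and go straight to Theorem \ref{starint}, which already establishes \eqref{eq:cond} for an arbitrary direct integral $\int_{[0,+\infty)}U_m\,d\mu(m)$ of (finite or infinite) multiples of scalar representations; your preliminary regrouping of the direct integral by mass (pushing the measure forward along the mass map and disintegrating over its fibres) is the standard step needed to put a general ``direct integral of scalar representations'' into that form, and it is unproblematic since all scalar representations of a fixed mass are equivalent. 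Your observation that the split hypothesis is logically redundant is accurate: Theorem \ref{starint} makes the conclusion hold for \emph{any} direct integral of scalars, split or not. What the paper's longer route buys is thematic rather than logical --- the corollary sits in the section on the split/B-W relation, and routing the argument through Theorem \ref{344} displays how splitness constrains the disintegration before modularity is applied. Your version is the more economical proof of the stated assertion, and your caveat that one cannot derive B-W \emph{from} Theorem \ref{344} (whose proof concerns the canonical net, for which B-W holds by construction) is a sensible clarification.
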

\begin{proof}
By Theorem \ref{344} the disintegration of the covariant Poincar\'e representation is purely atomic on masses, concentrated on isolated points and for each mass there can only be a finite multiple of the scalar representation. The disintegration satisfies the condition \eqref{eq:cond}  and the thesis follows by Theorem \ref{thm:starcor}.
\end{proof}

It remains an old interesting challenge to build up a more general bridge between the Split and the B-W properties. We expect this analysis to be generalized to finite multiples of spinorial representations.

\section*{Acknowledgements} I thank Roberto Longo for suggesting me the problem and useful comments and M. Bischoff for discussing Proposition \ref{prop:wMB}. I also thank Sergio Ciamprone and Yoh Tanimoto for comments.

\end{document}